\documentclass[preprint,12pt]{elsarticle}

\usepackage[margin=3cm]{geometry}

\usepackage{enumitem}
\usepackage{amsmath}
\numberwithin{equation}{section}
\usepackage{amssymb}
\usepackage{amscd}
\usepackage{amsfonts}
\usepackage{amsthm}
\newtheorem{theorem}{Theorem}[section]

\newtheorem{proposition}[theorem]{Proposition}
\newtheorem{corollary}[theorem]{Corollary}
\newtheorem{lemma}[theorem]{Lemma}
\theoremstyle{definition}

\newtheorem{remark}[theorem]{Remark}
\usepackage{graphicx}
\usepackage{hyperref}
\usepackage{tikz}
\usetikzlibrary{arrows.meta, positioning, calc, decorations.pathreplacing}

\newcommand{\beq}{\begin{equation}}
\newcommand{\eeq}{\end{equation}}
\DeclareMathOperator{\arccot}{arccot}
\newcommand{\R}{\mathbb{R}}
\newcommand{\C}{\mathbb{C}}
\newcommand{\dd}{\mathrm{d}}
\newcommand{\wt}{\widetilde}
\newcommand{\RR}{\mathcal R}
\newcommand{\N}{\mathbb{N}}

\newcommand{\ii}{\mathrm{i}}
\newcommand{\Rom}[3]{R_{#1}^{#2,#3}}
\newcommand{\Geg}[2]{\mathcal{C}_{#1}^{#2}}

\newcommand{\Jac}[3]{P_{#1}^{(#2,#3)}}
\newcommand{\ha}{\tfrac12}

\newcommand{\pFq}[2]{{}_{#1}F_{#2}}

\begin{document}
\begin{frontmatter}
\title{Romanovski polynomials, Gegenbauer connections, and $\mathrm{su}(1,1)$ ladder structures}
\author[uned]{J. A. Vallejo\corref{corresponding}}\ead{jvallejo@mat.uned.es}
\address[uned]{Departamento de Matem\'aticas Fundamentales, Universidad Nacional de Educaci\'on a Distancia, Madrid, Spain}
\author[uaslp]{M. Kirchbach}\ead{mariana@ifisica.uaslp.mx}
\address[uaslp]{Instituto de F\'isica, Universidad Aut\'onoma de San Luis Potos\'i, San Luis Potos\'i, Mexico}
\cortext[corresponding]{Corresponding author}
\begin{abstract}
We study the monic Romanovski (pseudo-Jacobi) polynomials corresponding to the degree
dependent parameters $\beta_K=-K$, $\alpha_K=c/(K+1)$, with $K=n+\ell$. We write
down, in explicit form, the parameter preserving first order lowering and raising
relations at fixed $(\alpha,\beta)$, with real proportionality constants.
Combining them one recovers the standard three-term recurrence existing in any hypergeometric-type family, and iterating them one
gets an ordered first order construction of $R_n^{\alpha,\beta}$ starting from the unit
constant polynomial. We also solve the connection problem with the $\alpha=0$ family in
a finite triangular form, identifying this last family with the Gegenbauer polynomials,
and transferring the ladder relations to the $(n,\ell)$ lattice. After the substitution
$x=\cot\chi$, the in-level operators depend on $\ell$, but not on $K$. The resulting
dressed functions support intrinsic lowest weight $\mathrm{su}(1,1)$ modules along the
columns with $\ell$ fixed, while the circular row ($n=0$) requires an explicit boundary
prescription and a rescaling.
\end{abstract}
\begin{keyword}
Romanovski--Routh polynomials \sep pseudo-Jacobi polynomials \sep connection coefficients \sep ladder operators \sep Gegenbauer polynomials \sep finite orthogonality
\MSC[2020] 33C45 \sep 33C47 \sep 34B24 \sep 42C05 \sep 22E70
\end{keyword}
\end{frontmatter}

\section{Introduction}

The Romanovski polynomials were introduced in 1929 by V.~I.~Romanovski, in the context
of an extension of the Student's $t-$distribution \cite{Romanovski}. A particular case
had already appeared in the work of Routh \cite{Routh}, so the family is also called
Romanovski--Routh or, in Lesky's classification of the finite systems of continuous
classical orthogonal polynomials, pseudo-Jacobi \cite{Lesky,Natanson1}. In the Askey
scheme they occupy the entry \emph{pseudo-Jacobi} \cite[\S9.9]{KLS} (see also
\cite[\S9.9]{Koornwinder}, where the range of admissible parameters is extended). They
are the polynomial solutions of a second order linear differential equation of
hypergeometric type \cite{Nikiforov} on the real line,
\[
(1+x^2)y''(x) + (2\beta x + \alpha)y'(x) - n(2\beta + n - 1)y(x) = 0\,,
\]
and are orthogonal with respect to a weight function that combines a rational factor
$(1+x^2)^{\beta-1}$ with an exponential factor $\exp(-\alpha\arccot x)$. When the
parameters $\alpha$ and $\beta$ are held fixed, only \emph{finitely many} of them are
orthogonal.

However, being of hypergeometric type the family inherits, \emph{at fixed parameters}, the classical three-term recurrence, a Rodrigues formula, and first
order differential lowering and raising relations, which are obtained from the structure
relation $\sigma(x)P_n'=a_nP_{n+1}+b_nP_n+c_nP_{n-1}$ (where $\sigma(x)=1+x^2$) by eliminating one of the two
neighbors \cite{Nikiforov,Koepf,Ismail} (see also \cite{Cotfas,Cotfas2} for the same
construction from the point of view of shape invariance in quantum mechanics). In Sections~\ref{sec:recurrences} and \ref{sec:trig} we give the \emph{explicit real form} 
of the two first order relations in the monic normalization,
with the proportionality constants written out, together with a direct derivation
of the standard three-term recurrence from them. To this
we must add the observation that, after a trigonometric substitution and the choice
$\alpha_K=c/(K+1)$, the resulting in-level\footnote{See Section \ref{sec:basic} below for the terminology regarding levels, multiplets, etc., which is taken from the physical context in which these objects appear.} operators become independent of the level. It
is this explicitness what makes the relations usable in the degree dependent setting we
describe next.

Let us allow the parameters to depend on the degree in the specific manner
\beq\label{eq:degdep}
K = n + \ell\,,\qquad \alpha_K = \frac{c}{K+1}\,,\qquad \beta_K = -K
\eeq
(here $\ell\in\N_0=\N\cup\{0\}$, and $c\in\R$ is a constant). Then, for each fixed $K$,
the finite family $\{\Rom{n}{\alpha_K}{-K}\}_{n=0}^K$ is an orthogonal set with respect
to its own weight. This is the pattern of parameters produced by the trigonometric
Rosen--Morse potential \cite{Compean,Raposo,Weber}, where the label $K$ is the principal
quantum number. The pair $(n,\ell)$ organizes the polynomials on a two-dimensional
lattice, and there appear naturally two different directions on it.

\begin{enumerate}[label=\textit{\alph*})]
\item Along a \emph{column} ($\ell$ fixed, $n$ and hence $K$ varying) both parameters
move with the degree. When $\alpha$ is kept constant this is precisely the family of
complementary polynomials $Q_\nu^{(\alpha,-a)}=R_\nu^{\alpha,-a-\nu}$ studied by Weber
\cite{Weber}, who obtained for it a generating function, a first order recursive
differential relation and a three-term recursion with polynomial coefficients
\cite[Thms.~1.1, 1.2, 1.8]{Weber}. If, moreover, $\alpha$ runs with the degree as in
\eqref{eq:degdep}, those relations do not connect consecutive members of the family any
longer, because $\alpha_{K-1}\neq\alpha_K$. Relations among polynomials with degree
dependent parameters of this kind have been studied, in a rational extension setting, in
\cite{Quesne}.
\item Along an \emph{anti-diagonal} ($K$ fixed, $n$ and $\ell$ varying in opposite
directions) the pair of parameters $(\alpha_K,\beta_K)$ is constant. Thus, the classical
apparatus applies word by word to each level multiplet, and it does so for the degree
dependent family as well.
\end{enumerate}

The observation around which this paper is organized is that it is this 
second item what makes the degree-dependent family tractable. Indeed, the relations surviving the degree
dependence are exactly those which do not change the level. Thus, our aim is to develop
them in explicit form (Section~\ref{sec:recurrences}), to show that they generate every
member of a level multiplet from the constant polynomial by first order operations
(Section~\ref{sec:trig}), and to exhibit their trigonometric realization, in which a
single $\ell-$indexed operator acts on all the levels at once. Along the way we record
the identification of the $\alpha=0$ subfamily with the Gegenbauer polynomials and the
resulting Romanovski form of the $S^3$ hyperspherical harmonics
(Section~\ref{sec:relation}), we solve the connection problem between the two parameter
family and its $\alpha=0$ partners (Section~\ref{sec:connection}), and we organize the
four families of first order moves on the lattice, together with the $\mathrm{su}(1,1)$
modules they support (Sections~\ref{sec:ladders} and \ref{sec:visualization}).

\subsection*{Relation to previous work}
The recurrence at fixed parameters and the first order structure relations are classical
features of the hypergeometric type families \cite{Nikiforov,Koepf,Ismail}, and under
the dictionary \eqref{eq:KLS-dictionary} the recurrence is the one in
\cite[(9.9.4)]{KLS}. The $\alpha=0$ identification and the relation to Jacobi--Gegenbauer
are classical as well \cite{Weber,Wuensche,Szego}. Our contribution here
is their explicit real form in a single monic normalization, with all the constants and
boundary conventions written out, and their organization for the degree dependent
multiplets \eqref{eq:degdep}. The connection coefficients are given directly in the
monic Romanovski parametrization, as a finite subdiagonal sum with a parity property,
and with elementary first subdiagonals. The lattice operators are
conjugates of standard Gegenbauer contiguous relations, and lattices of first order ladder
operators of this kind have been constructed for the Jacobi family, which contains the
ultraspherical one, in \cite{Miller,Wuensche,Celeghini}. There the ambient algebra is
larger than ours, since the Jacobi polynomials carry three independent labels, and
$\mathrm{su}(1,1)$ appears as the subalgebra acting when the remaining labels are held
fixed. It is the collective organization of our operators, together with the
$K-$independent trigonometric in-level realization, what (in our view) make them
interesting.

\section{Preliminaries}
\label{sec:basic}

Let $\alpha,\beta\in\R$. The \emph{Romanovski weight function} is
\begin{equation}\label{eq:weight}
w^{\alpha,\beta}(x) := (1+x^2)^{\beta-1}\exp\,(-\alpha\arccot x)\,,\qquad x\in\R,
\end{equation}
where $\arccot x\in(0,\pi)$ denotes the principal branch. Note that $\arccot x\to0$ as
$x\to+\infty$, while $\arccot x\to\pi$ as $x\to-\infty$, so the exponential factor is
bounded and bounded away from zero on $\R$, and it does not affect the integrability.
Putting $\sigma(x)=1+x^2$ and $\tau(x)=2\beta x+\alpha$, the weight verifies the Pearson
equation $(\sigma w)'=\tau w$.

The \emph{Romanovski polynomials} $\RR_n^{\alpha,\beta}$ of degree $n$ are the polynomial
solutions (unique up to normalization for nondegenerate parameter values, as explained in
Remark~\ref{rem:normalization}) of
\begin{equation}\label{eq:RomanovskiDE}
(1+x^2)y''(x) + (2\beta x + \alpha)y'(x) - n(2\beta + n - 1)y(x) = 0.
\end{equation}
These are real polynomials; their description as Jacobi
polynomials with complex conjugate parameters and imaginary argument is only a formal
analytic continuation, and it has tended to obscure their independent status.
Comprehensive modern accounts can be found in \cite{Raposo}, \cite{Natanson1,Natanson2}
and \cite[\S9.9]{KLS}. For their orthogonality and asymptotics outside the classical
parameter range we refer the reader to \cite{Jordaan}, and for a related family on the
unit circle to \cite{MFSRT}. They admit the Rodrigues representation
\begin{equation}\label{eq:RomanovskiRodrigues}
\RR_n^{\alpha,\beta}(x)
=\frac{1}{w^{\alpha,\beta}(x)}\frac{\dd^n}{\dd x^n}
\Bigl(w^{\alpha,\beta}(x)(1+x^2)^n\Bigr)\,,
\end{equation}
once a multiplicative constant has been chosen. Dividing by the Pochhammer symbol
$(2\beta+n-1)_n$ makes the right-hand side monic, and this is the Rodrigues type formula
\cite[(9.9.9)]{KLS}. Throughout the paper the Pochhammer symbol is understood as the
finite product
\begin{equation}\label{eq:pochhammer}
(a)_k:=a(a+1)\cdots(a+k-1)\quad (k\ge1),\qquad (a)_0:=1,
\end{equation}
which is defined for every $a\in\C$, and in particular for a nonpositive integer $a$. It
should \emph{not} be read as the quotient $\Gamma(a+k)/\Gamma(a)$: the two agree whenever
$a$ is not a pole of $\Gamma$, but the factors $(-n)_r$ appearing in
\eqref{eq:Romanovski-monomial-coefficients} below --- which are precisely what truncates
the hypergeometric series there --- are finite and generally nonzero for $r\le n$, while
the corresponding quotient of Gamma functions is undefined.

The monic pseudo-Jacobi polynomials $P_n(x;\nu,N)$ of \cite[(9.9.1)]{KLS} verify the
same equation under
\begin{equation}
\nu=\tfrac{\alpha}{2},\qquad N=-\beta,\qquad
\Rom{n}{\alpha}{\beta}(x)=P_n\bigl(x;\tfrac{\alpha}{2},-\beta\bigr).
\label{eq:KLS-dictionary}
\end{equation}
This same dictionary follows from the weights, because $\arccot x=\pi/2-\arctan x$, and
the resulting factor $e^{-\alpha\pi/2}$ is irrelevant for the orthogonality. We will use
\eqref{eq:KLS-dictionary} below, in order to compare the recurrence and the shift
relations with \cite[\S9.9]{KLS}.
\begin{remark}\label{rem:normalization}
Throughout the paper we use the monic normalization, and we distinguish the monic
Romanovski polynomials by a normal, non calligraphic font. Thus,
$\Rom{n}{\alpha}{\beta}(x)= x^n+\text{lower-degree terms}$ denotes the unique monic
polynomial solution of \eqref{eq:RomanovskiDE}, for those parameter values such that the
denominators $(k-n)(k+n+2\beta-1)$, $0\le k<n$, of the coefficient recurrence
\eqref{eq:CoeffRecursionSolved} below do not vanish, that is,
\begin{equation}\label{eq:nondeg}
2\beta+m+k-1\neq0\qquad\text{for }0\le k<m,
\end{equation}
with $m=n$. Otherwise the uniqueness can fail: for $\beta=-1$ and $n=2$ the condition
\eqref{eq:nondeg} is violated at $k=1$, and in fact every monic polynomial $x^{2}+bx-1$,
$b\in\R$, solves \eqref{eq:RomanovskiDE} with these parameters. We will refer to \eqref{eq:nondeg} as the \emph{nondegeneracy
condition at degree $m$}. All the parameter values used in this paper verify it. In
particular $\beta=-K$ with $0\le m\le K$ does, because in that case $k\le m-1$ gives
$m+k\le 2m-1\le 2K-1$, and hence $2\beta+m+k-1=-2K+m+k-1\le-2$. The calligraphic symbol
$\RR_n^{\alpha,\beta}$ is reserved for the Rodrigues normalization
\eqref{eq:RomanovskiRodrigues}.
\end{remark}

Let us substitute $\Rom{n}{\alpha}{\beta}(x)=\sum_{j=0}^n r_{n,j}x^j$ with $r_{n,n}=1$
into \eqref{eq:RomanovskiDE}. The coefficient of $x^k$ in $(1+x^2)y''$ is
$(k+2)(k+1)r_{n,k+2}+k(k-1)r_{n,k}$ (here $r_{n,j}:=0$ for $j\notin\{0,1,\dots,n\}$, a
convention which will be used again in the proof of Theorem~\ref{inlevel:thm:lower}),
the one in $(2\beta x+\alpha)y'$ is $2\beta k r_{n,k}+\alpha(k+1)r_{n,k+1}$, and the one
in $-n(2\beta+n-1)y$ is $-n(2\beta+n-1)r_{n,k}$. Now, since $k(k-1)+2\beta
k-n(2\beta+n-1)=(k-n)(k+n+2\beta-1)$, we arrive at the fundamental coefficient
recurrence
\begin{equation}
(k+2)(k+1)r_{n,k+2}+\alpha(k+1)r_{n,k+1}+(k-n)(k+n+2\beta-1)r_{n,k}=0\,,
\label{eq:CoeffRecurrence}
\end{equation}
which, starting from $r_{n,n}=1$, $r_{n,n+1}=0$, determines all the lower coefficients
by descending induction,
\begin{equation}
r_{n,k}=-\frac{(k+2)(k+1)r_{n,k+2}+\alpha(k+1)r_{n,k+1}}{(k-n)(k+n+2\beta-1)}\,,
\label{eq:CoeffRecursionSolved}
\end{equation}
whenever the denominator is nonzero. In particular,
\begin{equation}\label{eq:subleading-coeff}
r_{n,n-1}=\frac{n\alpha}{2(\beta+n-1)}\,,\qquad
r_{n,n-2}=\frac{n(n-1)\bigl[2(\beta+n-1)+\alpha^{2}\bigr]}
{4(\beta+n-1)(2\beta+2n-3)}\,.
\end{equation}

Whenever rational expressions appear, one must be cautious with the denominators, so
let us fix the cancellation convention we will adhere to in what follows.

\begin{remark}
\label{rem:cancellation}
Whenever a displayed rational coefficient has a factor in the numerator and a factor in
the denominator which vanish simultaneously, its value is understood after cancellation
or, equivalently, by rational continuation through the removable
singularity. True poles are always excluded by hypothesis. This applies in particular to
\eqref{eq:subleading-coeff}, to the recurrence coefficients \eqref{eq:3TR-coeffs}, to
the ladder constant \eqref{eq:inlevel-low-const} and to the closed form
\eqref{eq:Romanovski-monomial-coefficients}. In every one of these cases the continued
value is the one produced by \eqref{eq:CoeffRecursionSolved}, so no ambiguity arises at
all.
\end{remark}

\subsection{Finite orthogonality and level multiplets}
At fixed parameters the weight \eqref{eq:weight} decays only polynomially, so that only
a finite number of its moments exist, and only finitely many Romanovski polynomials are
orthogonal with respect to it \cite{Lesky,Jamei,Raposo}. However, for the choice
\eqref{eq:degdep} the polynomials
$\{\Rom{n}{\alpha_K}{-K}\}_{n=0}^{K}$ are mutually orthogonal with respect to
$w^{\alpha_K,-K}$. Indeed, the inner products converge because their integrands are
$O(|x|^{n+m-2K-2})$, and the self adjoint form $(\sigma w y')'=n(2\beta+n-1)wy$ gives,
for $m<n\le K$,
\[
[\sigma w(R_n'R_m-R_nR_m')]_{-\infty}^{\infty}
=(\varrho_n-\varrho_m)\int_{\mathbb R}R_nR_mw\,\dd x.
\]
Here the boundary term is zero, while $\varrho_n-\varrho_m=(n-m)(n+m-2K-1)\ne0$. 

With the lattice terminology introduced in the Introduction, a \emph{level} is the anti-diagonal on which $ K=n+\ell $ is fixed. Since both $\alpha_K$ and $\beta_K$ depend only on $K$, all the polynomials lying on the same level share the parameter pair $(\alpha_K,\beta_K)$. The corresponding \emph{level multiplet} is \[ \mathcal{M}_K := \left\{ \Rom{n}{\alpha_K}{\beta_K} :\ 0\le n\le K,\ \ell=K-n \right\}. \] We call a relation or an operator \emph{in-level} when it preserves $K$, and hence maps members of $\mathcal{M}_K$ to members of the same multiplet. Equivalently, an in-level move changes $n$ and $\ell$ by opposite amounts, so that $ \Delta K=\Delta n+\Delta\ell=0$. In particular, it preserves the pair $(\alpha_K,\beta_K)$, which is the essential reason why the fixed-parameter differential relations remain applicable to the degree dependent family.

Two comments about the particular choice \eqref{eq:degdep} are in order. 
First, the dependence of $\beta_K$ on
the degree is linear, while the one of $\alpha_K$ is rational, and it is this last one
what obstructs the column relations recalled in the Introduction. Second, the precise
form $\alpha_K=c/(K+1)$ is dictated by the trigonometric Rosen--Morse problem
\cite{Compean,Raposo}, but it plays no role in
Sections~\ref{sec:relation}--\ref{sec:recurrences}. All the statements there hold for an
arbitrary $\alpha_K$, because $\alpha_K$ is simply a constant once $K$ has been fixed.
It is used for the first time in Section~\ref{sec:trig}, where it is exactly what makes
the trigonometric in-level operators independent of $K$.

\subsection{The three-term recurrence}

Since $\alpha_K$ and $\beta_K$ are constant inside a level, the level multiplet is a
finite orthogonal polynomial sequence in the usual sense and, in particular, it verifies
the standard three-term recurrence. By \eqref{eq:KLS-dictionary} this is exactly the
normalized recurrence relation \cite[(9.9.4)]{KLS} for the pseudo-Jacobi polynomials,
and, as these are already monic, no conversion of normalization is involved. Indeed,
substituting $\nu=\alpha/2$, $N=-\beta$ into \cite[(9.9.4)]{KLS} and using
$(n-N-1-i\nu)(n-N-1+i\nu)=(\beta+n-1)^{2}+\tfrac14\alpha^{2}$, one reproduces
\eqref{eq:3TR-coeffs} below. We restate the recurrence in the
$(\alpha,\beta)$ variables because the coefficients in this form are the ones we will
use in Sections~\ref{sec:recurrences} and \ref{sec:trig}. The proof is deferred to
Corollary~\ref{cor:3TR}.

\begin{proposition}\label{prop:3TR-statement}
Adopt the convention $\Rom{-1}{\alpha}{\beta}\equiv0$. Let $n\ge0$, and let
$\alpha,\beta\in\R$ verify the nondegeneracy condition \eqref{eq:nondeg} at the degrees
$n$ and $n+1$, and also at degree $n-1$ when $n\ge1$. Then, in the monic normalization,
\begin{equation}\label{eq:3TR}
x\,\Rom{n}{\alpha}{\beta}(x)=\Rom{n+1}{\alpha}{\beta}(x)+b_n\Rom{n}{\alpha}{\beta}(x)
+c_n\Rom{n-1}{\alpha}{\beta}(x),
\end{equation}
with
\begin{equation}\label{eq:3TR-coeffs}
\begin{split}
b_n=& \frac{\alpha(1-\beta)}{2(\beta+n-1)(\beta+n)}\,,\\
c_n=& -\,\frac{n\,(2\beta+n-2)\bigl[\alpha^{2}+4(\beta+n-1)^{2}\bigr]}
{4(\beta+n-1)^{2}(2\beta+2n-3)(2\beta+2n-1)}\,.
\end{split}
\end{equation}
\end{proposition}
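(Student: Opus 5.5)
The plan is a direct coefficient comparison, using only the explicit monic coefficients \eqref{eq:subleading-coeff} and the recurrence \eqref{eq:CoeffRecursionSolved}; no orthogonality is needed, which matters because the parameter range allowed in the statement need not admit enough finite moments.

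\emph{Step 1: expansion.} The polynomial $x\Rom{n}{\alpha}{\beta}-\Rom{n+1}{\alpha}{\beta}$ has degree at most $n$, since both terms are monic of degree $n+1$. By nondegeneracy at degrees $n-1,\dots,0$, the polynomials $\Rom{k}{\alpha}{\beta}$ are well defined and monic, so they form a basis of polynomials of degree at most $n$. We can therefore write
\[
x\Rom{n}{\alpha}{\beta}-\Rom{n+1}{\alpha}{\beta}=\sum_{k=0}^{n}d_k\Rom{k}{\alpha}{\beta}.
\]

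\emph{Step 2: only $k=n,n-1$ survive.} Instead of orthogonality I would use the operator $L=\sigma D^2+\tau D$, which satisfies $L\Rom{k}{\alpha}{\beta}=\lambda_k\Rom{k}{\alpha}{\beta}$ with $\lambda_k=k(2\beta+k-1)$. A direct computation gives
\[
L(xy)=xLy+2\sigma y'+\tau y .
\]
Apply this with $y=\Rom{n}{\alpha}{\beta}$ and rearrange to obtain
\[
(L-\lambda_n)\bigl(xR_n\bigr)=2\sigma R_n'+\tau R_n .
\]
The right side has degree $n+1$. Applying $(L-\lambda_{n+1})$ then kills the top component, so $(L-\lambda_{n+1})(L-\lambda_n)\bigl(x R_n\bigr)$ is a specific polynomial. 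I expect the cleaner route to be direct: show by coefficient comparison that $xR_n-R_{n+1}-b_nR_n-c_nR_{n-1}$ satisfies the eigen-equation structure forcing it to vanish. Concretely, I would first verify the identity at the three top coefficients (degrees $n$, $n-1$, $n-2$), and then argue that the remainder vanishes.

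For that remainder argument I would use the three-term structure of \eqref{eq:CoeffRecurrence}. The difference $E=xR_n-R_{n+1}-b_nR_n-c_nR_{n-1}$ has degree at most $n-2$, and I would show that its coefficients obey a descending recurrence which, given vanishing top data, propagates zeros downward. That is, for each $j$ I would check that the coefficient $e_j$ is determined by $e_{j+1},e_{j+2}$ through a relation obtained by combining \eqref{eq:CoeffRecurrence} for $n+1$, $n$, $n-1$ at matching indices. If this propagation fails to close, the fallback is orthogonality via the Pearson equation for generic $\beta$ (very negative, where the needed moments exist), followed by rational continuation in $\beta$. Both sides of \eqref{eq:3TR} have coefficients rational in $\beta$ by \eqref{eq:CoeffRecursionSolved}, so a polynomial identity holding for infinitely many $\beta$ extends to the whole nondegenerate set, with Remark~\ref{rem:cancellation} covering removable points.

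\emph{Step 3: reading off $b_n$ and $c_n$.} Comparing $x^n$ coefficients gives
\[
b_n=r_{n,n-1}-r_{n+1,n}=\frac{n\alpha}{2(\beta+n-1)}-\frac{(n+1)\alpha}{2(\beta+n)},
\]
which simplifies to $\alpha(1-\beta)/\bigl(2(\beta+n-1)(\beta+n)\bigr)$. Comparing $x^{n-1}$ coefficients gives
\[
c_n=r_{n,n-2}-r_{n+1,n-1}-b_nr_{n,n-1},
\]
using \eqref{eq:subleading-coeff}. The main obstacle is this simplification, where the numerator must factor as $n(2\beta+n-2)[\alpha^2+4(\beta+n-1)^2]$. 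I would carry it out by splitting the numerator into its $\alpha^2$-part and its $\alpha$-free part and factoring each separately; the expected value can be cross-checked against \cite[(9.9.4)]{KLS} through \eqref{eq:KLS-dictionary}. For $n=0$, $c_0=0$ is consistent with the convention $\Rom{-1}{\alpha}{\beta}\equiv0$.
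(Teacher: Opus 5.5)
Your argument is correct once you commit to the fallback, and it takes a different route from the paper. The paper proves Proposition~\ref{prop:3TR-statement} as Corollary~\ref{cor:3TR}. It first derives the lowering and raising relations \eqref{eq:inlevel-low}, \eqref{eq:inlevel-raise} from the intertwining Lemma~\ref{inlevel:lem:intertwine} and uniqueness of monic solutions. It then adds them, so the derivative cancels, $c_n=\gamma^-_n/\gamma^+_n$, and $b_n$ comes from the constant terms; $n=0$ is checked by hand. That proof is purely algebraic at the given $(\alpha,\beta)$. It uses uniqueness only at degrees $n-1,n,n+1$, which are exactly the hypotheses, and it yields the in-level ladder relations as a by-product.

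You instead fix $b_n=r_{n,n-1}-r_{n+1,n}$ and $c_n=r_{n,n-2}-r_{n+1,n-1}-b_nr_{n,n-1}$ by matching the $x^n$ and $x^{n-1}$ coefficients. This does simplify to \eqref{eq:3TR-coeffs}: with $B=\beta+n-1$, the $\alpha$-free part is $n(n-2B)/(4B^2-1)$ and the $\alpha^2$-coefficient is $-n(2B-n)/\bigl(4B^2(4B^2-1)\bigr)$. You then kill the degree $\le n-2$ remainder $E$ by positivity of $w^{\alpha,\beta}$ for $\beta<-n-1$, where the moments exist and the boundary terms vanish, followed by rational continuation in $\beta$. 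This is the standard textbook argument. It costs an analytic detour that the paper avoids here; the paper uses continuation only in Lemma~\ref{lem:monomial-inversion}.

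Two caveats. First, the ``descending recurrence for $e_j$'' is not an argument as it stands. The recurrences \eqref{eq:CoeffRecurrence} for $R_{n+1},R_n,R_{n-1}$ carry different factors $(k-m)(k+m+2\beta-1)$, and $xR_n$ shifts indices. So no closed homogeneous recurrence for the $e_j$ falls out without extra input such as a structure relation, which is exactly what the paper's lowering/raising pair supplies. The $(L-\lambda_{n+1})(L-\lambda_n)$ line is likewise left unfinished. Drop both and present orthogonality plus continuation as the proof.

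Second, Step~1 invokes nondegeneracy at all degrees $0,\dots,n-1$, which the statement does not assume. For $n=5$, $\beta=-1$ the hypotheses hold at degrees $4,5,6$, yet $\Rom{2}{\alpha}{-1}$ and $\Rom{3}{\alpha}{-1}$ are degenerate, so the basis $\{\Rom{k}{\alpha}{\beta}\}_{k\le n}$ need not exist at the given $\beta$. This is harmless for your continuation argument. There the basis is only needed at the auxiliary $\beta<-n-1$. At the target point you only need the coefficients of $R_{n-1},R_n,R_{n+1}$, which are regular exactly under the stated hypotheses, together with $b_n,c_n$, whose removable singularities are covered by Remark~\ref{rem:cancellation}. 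Still, Step~1 should be rephrased accordingly.
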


At $n=0$ the last term is absent, both because $c_0=0$ and because
$\Rom{-1}{\alpha}{\beta}=0$, and \eqref{eq:3TR} reduces to
$x=\Rom{1}{\alpha}{\beta}(x)+b_0$, with $b_0=-\alpha/(2\beta)$. By
Remark~\ref{rem:cancellation}, at $n=0$ the expressions \eqref{eq:3TR-coeffs} give
$b_0=-\alpha/(2\beta)$ and $c_0=0$, the unsimplified form of $b_0$ being indeterminate
at $\beta=1$, although the cancelled value is not. Also, at $n=1$ the factor
$2\beta+2n-3=2\beta-1$ cancels between the numerator and the denominator of $c_n$, and
what remains is $c_1=-(\alpha^{2}+4\beta^{2})/\bigl(4\beta^{2}(2\beta+1)\bigr)$. For the
level multiplet, with $\beta=\beta_K=-K$, $\ell=K-n\ge1$ and $c=(K+1)\alpha_K$, the
coefficients become
\begin{equation}\label{eq:3TR-level}
b_n=\frac{c}{2\,\ell(\ell+1)}\,,
\qquad
c_n=\frac{n(n+2\ell+2)}{(2\ell+1)(2\ell+3)}
\left[1+\frac{\alpha_K^{2}}{4(\ell+1)^{2}}\right]>0 .
\end{equation}

The positivity of $c_n$ in \eqref{eq:3TR-level} is what one expects for
a positive definite functional. Note also that $b_n$ depends there on $\ell$ alone. At
$\ell=0$, that is, at $n=K$, the expression for $b_n$ is singular, while $c_n$ has the
finite limit $\tfrac13 n(n+2)\bigl[1+\tfrac14\alpha_K^{2}\bigr]$. What fails at the top
of a level is not the backward coefficient, but the forward term. Indeed, with
$\beta=-K$ and $m=K+1$, $k=K$, the left-hand side of \eqref{eq:nondeg} vanishes, so
$\Rom{K+1}{\alpha_K}{-K}$ is degenerate and it is not defined by
Remark~\ref{rem:normalization}. Thus, the recurrence \eqref{eq:3TR} runs only for $0\le
n\le K-1$, which is as far as a finite orthogonal family allows. The value
of $c_n$ at $n=K$ still has a meaning in the lowering relation
\eqref{eq:inlevel-low-phys}, but not as a part of a forward three-term recurrence. This
is the same boundary effect which will appear again as the restriction $\ell\ge1$ in all
the raising relations below.

\subsection{Gegenbauer conventions}

For $\lambda>-\ha$ the Gegenbauer (ultraspherical) polynomials $\Geg{n}{\lambda}$ are
defined by the generating function $\sum_{n\ge0}\Geg{n}{\lambda}(u)t^n=(1-2ut+t^2)^{-\lambda}$;
and for $\lambda>-\ha$, $\lambda\ne0$, they are orthogonal on $[-1,1]$ with respect to
$(1-u^2)^{\lambda-1/2}$. We use the normalization of Szeg\H{o}~\cite{Szego}, in which
\begin{equation}
\Geg{0}{\lambda}(u)=1,\qquad \Geg{1}{\lambda}(u)=2\lambda u,\qquad
\Geg{n}{\lambda}(u)=\frac{2^{n}(\lambda)_n}{n!}\,u^n+\dots ,
\label{eq:gegenbauer-leading}
\end{equation}
with $(a)_n$ as in \eqref{eq:pochhammer}, and $\Geg{n}{\lambda}(1)=(2\lambda)_n/n!$. Throughout
the paper we adopt the conventions $\Geg{-1}{\lambda}\equiv0$ and, at $\lambda=0$,
$\Geg{0}{0}\equiv1$ while $\Geg{n}{0}\equiv0$ for $n\ge1$. This last one is a
degeneration of the chosen normalization, and not of the underlying polynomial system,
because $\lim_{\lambda\to0}\lambda^{-1}\Geg{n}{\lambda}=(2/n)\,T_{n}$ is finite and
nonzero for $n\ge1$ \cite[Eq.~(4.7.8)]{Szego}. The Gegenbauer polynomials are related to
the Jacobi polynomials with equal parameters by
\begin{equation}
\Jac{n}{\delta}{\delta}(u)
=\frac{\Gamma(n+\delta+1)\,\Gamma(2\delta+1)}{\Gamma(\delta+1)\,\Gamma(n+2\delta+1)}\,
\Geg{n}{\delta+1/2}(u),
\label{eq:gegenbauer-jacobi} \end{equation} and they verify
\begin{equation}
\Bigl[(1-u^2)\tfrac{\dd^2}{\dd u^2}-(2\lambda+1)u\tfrac{\dd}{\dd u}
+n(n+2\lambda)\Bigr]\Geg{n}{\lambda}(u)=0 .
\label{eq:gegenbauer-ode}
\end{equation}

\section{The $\alpha=0$ subfamily and the Gegenbauer connection}
\label{sec:relation}
For $\alpha=0$, the defining equation \eqref{eq:RomanovskiDE} reads
\begin{equation}\label{RomDE0}
(1+x^2)y''+2\beta x y'-n(2\beta+n-1)y=0.
\end{equation}
Comparing the differential equations and the leading coefficients gives the following
identity, which is the Jacobi--Gegenbauer transformation \eqref{eq:gegenbauer-jacobi}
written for equal Jacobi parameters.
\begin{theorem}\label{thm:master}
For $n,\ell\in\N_0$, $K=n+\ell$, and $\beta_K=-K$,
\begin{align}
&\Rom{n}{0}{-K}(x)
 =c(n,\ell)(1+x^2)^{n/2}
   \Geg{n}{\ell+1}\!\left(\frac{x}{\sqrt{1+x^2}}\right),
\label{eq:master-x}
\end{align}
or, in its dressed version after a trigonometric change of variable,
\begin{align}   
&\sin^n\chi\,\Rom{n}{0}{-K}(\cot\chi)
 =c(n,\ell)\Geg{n}{\ell+1}(\cos\chi),
\label{eq:master}
\end{align}
where
\[
c(n,\ell)=\frac{n!(2\ell+1)!}{(n+2\ell+1)!}
=\binom{n+2\ell+1}{n}^{-1}.
\]
Consequently, we have the identity for dressed Romanovski polynomials
\begin{equation}\label{eq:master-K}
\sin^K\chi\,\Rom{n}{0}{-K}(\cot\chi)
=c(n,\ell)\sin^\ell\chi\,\Geg{n}{\ell+1}(\cos\chi).
\end{equation}
\end{theorem}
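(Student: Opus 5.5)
We prove \eqref{eq:master} first; \eqref{eq:master-x} and \eqref{eq:master-K} then follow at once. Put $x=\cot\chi$ with $\chi\in(0,\pi)$, so that $\sin\chi=(1+x^2)^{-1/2}>0$ and $\cos\chi=x/\sqrt{1+x^2}$. Then \eqref{eq:master} is the same statement as \eqref{eq:master-x}, because $(1+x^2)^{n/2}=\sin^{-n}\chi$. Multiplying \eqref{eq:master} by $\sin^\ell\chi$ gives \eqref{eq:master-K}, since $K=n+\ell$.

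The plan is to set $u=\cos\chi$ and consider
\[
g(u):=\sin^n\chi\,\Rom{n}{0}{-K}(\cot\chi)=\sum_{j} r_{n,j}\cos^j\chi\,\sin^{n-j}\chi .
\]
For $\alpha=0$, the recurrence \eqref{eq:CoeffRecurrence} links only $r_{n,k}$ and $r_{n,k+2}$. Hence $\Rom{n}{0}{-K}$ has the parity of $n$, and $n-j$ is always even. It follows that $g(u)=\sum_j r_{n,j}u^j(1-u^2)^{(n-j)/2}$ is a polynomial in $u$ of degree at most $n$. Its coefficient of $u^n$ is $\sum_j r_{n,j}(-1)^{(n-j)/2}$. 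This equals $(1+x^2)^{-n/2}\Rom{n}{0}{-K}(x)$ evaluated at $x=\pm\ii$, or more directly, it is the limit of $g(u)/u^n$; I will compute it from the endpoint behaviour instead, as explained below.

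The central step is to show that $g$ satisfies the Gegenbauer equation \eqref{eq:gegenbauer-ode} with $\lambda=\ell+1$. For $y(x)=\Rom{n}{0}{-K}(x)$ we have $g=(1+x^2)^{-n/2}y$. Write $y=(1+x^2)^{n/2}g(u)$ with $u=x(1+x^2)^{-1/2}$, so that $\dd u/\dd x=(1+x^2)^{-3/2}=(1-u^2)^{3/2}$. Substitute this into \eqref{RomDE0} with $\beta=-K$ and use $2\beta=-2n-2\ell$. After dividing out the common power of $(1+x^2)$, a routine chain-rule computation should yield
\[
(1-u^2)g''-(2\ell+3)u\,g'+n(n+2\ell+2)g=0 ,
\]
which is \eqref{eq:gegenbauer-ode} with $\lambda=\ell+1$. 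I expect this to be the main obstacle. The terms that should cancel are the zeroth-order contributions coming from differentiating the prefactor $(1+x^2)^{n/2}$ twice, and they must cancel against $-n(2\beta+n-1)y$. The cross terms must combine to give exactly the coefficient $2\lambda+1=2\ell+3$. I would organize the computation in the $\chi$ variable, using $\dd/\dd x=-\sin^2\chi\,\dd/\dd\chi$, since that bookkeeping is cleaner.

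With the equation established, I would invoke uniqueness. For $\lambda=\ell+1>0$, the Gegenbauer equation has, up to a scalar, a unique polynomial solution of degree at most $n$. The reason is that the eigenvalues $k(k+2\lambda)$, $k\le n$, are distinct. Therefore $g=c\,\Geg{n}{\ell+1}$ for some constant $c$.

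To find $c$, I would evaluate at $u=1$, which corresponds to $\chi\to0^+$, i.e. $x\to+\infty$. There $\sin^n\chi\,\Rom{n}{0}{-K}(\cot\chi)\to1$, because the polynomial is monic and $\sin\chi\cot\chi\to1$. On the other side, $\Geg{n}{\ell+1}(1)=(2\ell+2)_n/n!=(n+2\ell+1)!/\bigl(n!(2\ell+1)!\bigr)$. Hence $c=c(n,\ell)=\binom{n+2\ell+1}{n}^{-1}$, as claimed. This endpoint argument also shows that $g\not\equiv0$, so the identification with $\Geg{n}{\ell+1}$ is genuine.
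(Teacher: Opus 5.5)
Your proof is correct, but it runs in the opposite direction from the paper's.

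The paper starts from the Gegenbauer side. It takes $g:=c(n,\ell)(1+x^2)^{n/2}\Geg{n}{\ell+1}(x/\sqrt{1+x^2})$ and uses the parity of $\Geg{n}{\ell+1}$ to see that $g$ is a polynomial in $x$. It checks monicity through $c(n,\ell)\Geg{n}{\ell+1}(1)=1$ and verifies by direct computation that $g$ solves \eqref{RomDE0} with $\beta=-K$. It then concludes by uniqueness of the monic Romanovski solution, which is where the nondegeneracy condition \eqref{eq:nondeg} at $\beta=-K$ does its work.

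You instead push $\Rom{n}{0}{-K}$ to the $u$-side and use the parity of the $\alpha=0$ recurrence to get a polynomial in $u$. You derive the Gegenbauer equation and conclude by uniqueness of polynomial eigenfunctions of the Gegenbauer operator. That operator is triangular on $\mathcal P_n$ with diagonal entries $n(n+2\lambda)-k(k+2\lambda)\neq0$ for $k<n$, $\lambda=\ell+1>0$. In your route the nondegeneracy enters only through the existence of $\Rom{n}{0}{-K}$ and the vanishing of its coefficients $r_{n,n-1},r_{n,n-3},\dots$.

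The two parity steps and the two chain-rule computations mirror each other. Your constant $g(1)=r_{n,n}=1$ is the same arithmetic as the paper's leading-coefficient check. What the paper buys is an explicit candidate whose polynomial nature and monicity are checked directly. What you buy is landing on the standard characterization of Gegenbauer polynomials, where uniqueness needs no condition on $\beta$.

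The computation you flag as the main obstacle does go through. With $x=\cot\chi$ and $y=\sin^{-n}\chi\,G$, the Romanovski equation becomes $G''+2(\ell+1)\cot\chi\,G'+n(n+2\ell+2)G=0$. The $\csc^2\chi$ terms cancel exactly because $K=n+\ell$. Setting $G(\chi)=g(\cos\chi)$ then gives \eqref{eq:gegenbauer-ode} at $\lambda=\ell+1$.

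Your aside that the coefficient of $u^n$ equals $(1+x^2)^{-n/2}\Rom{n}{0}{-K}(x)$ at $x=\pm\ii$ is not meaningful as written; that coefficient is $\ii^{n}\Rom{n}{0}{-K}(-\ii)$. Since you never use it, it does no harm.
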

\begin{proof}
Let $s:=\sqrt{1+x^2}$ and $u:=x/s$, so that $u'=s^{-3}$ and $1-u^{2}=s^{-2}$, and put
$g:=c(n,\ell)\,s^{n}\Geg{n}{\ell+1}(u)$.

First, $g$ is a monic polynomial of degree $n$. Indeed, by the parity
$\Geg{n}{\lambda}(-u)=(-1)^n\Geg{n}{\lambda}(u)$ only the monomials $u^k$ with $k\equiv
n\ (\mathrm{mod}\ 2)$ occur, and each one of them contributes $x^{k}s^{\,n-k}$ with $n-k$
even, hence a polynomial; and since $u\to1$ as $x\to+\infty$, the coefficient of $x^n$ is
\[
c(n,\ell)\,\Geg{n}{\ell+1}(1)=\frac{n!\,(2\ell+1)!}{(n+2\ell+1)!}\cdot\frac{(2\ell+2)_n}{n!}=1 ,
\]
because $(2\ell+2)_n=(n+2\ell+1)!/(2\ell+1)!$.

Second, $g$ solves \eqref{RomDE0} with $\beta=\beta_K=-K$. Writing $y=\Geg{n}{\ell+1}$
one has $g'=c(n,\ell)s^{\,n-3}\bigl(n\,x\,s\,y+y'\bigr)$, and, after eliminating $y''$ by
means of the Gegenbauer equation \eqref{eq:gegenbauer-ode} at $\lambda=\ell+1$, a direct
computation gives
\[
(1+x^{2})g''-2(n+\ell)x\,g'+n(n+2\ell+1)\,g=0.
\]
Since $\beta_K=-K=-(n+\ell)$ and $-n(2\beta_K+n-1)=n(n+2\ell+1)$, this is precisely
\eqref{RomDE0}.

By Remark~\ref{rem:normalization} the monic solution of \eqref{RomDE0} of degree $n$ is
unique for $\beta_K=-K$, so $g=\Rom{n}{0}{-K}$, and this is \eqref{eq:master-x}. Putting
now $x=\cot\chi$, so that $s^{-1}=\sin\chi$ and $u=\cos\chi$ on $(0,\pi)$, and
multiplying by $\sin^n\chi$, we get \eqref{eq:master}; a further factor $\sin^{\ell}\chi$
gives \eqref{eq:master-K}.
\end{proof}

Thus, the dressed $\alpha=0$ Romanovski polynomials give the usual unnormalized $S^3$
hyperspherical harmonics. In particular,
\begin{equation}\label{eq:harmonics}
Y_{K\ell m}=c(n,\ell)^{-1}\sin^K\chi\,
\Rom{n}{0}{-K}(\cot\chi)Y_\ell^m(\theta,\varphi).
\end{equation}
In particular, they are eigenfunctions of $-\Delta_{S^3}+1$ with eigenvalue
$K(K+2)+1=(K+1)^2$, where $\Delta_{S^3}$ denotes the Laplace--Beltrami operator of the
round unit sphere. Of course, the numerical coincidence of this eigenvalue with the
dimension $(K+1)^2$ of the eigenspace of degree $K$ should not be confused with an
identity of invariants.

\section{The connection problem at fixed $\beta$}\label{sec:connection}

Now we are going to expand a general Romanovski polynomial in the basis provided by its
$\alpha=0$ partners with the same second parameter. Let us fix $K\in\N_0$ and
$\beta_K=-K$. The polynomials $\Rom{m}{0}{\beta_K}$, $0\le m\le K$, all share the weight
$w^{0,-K}$ and form an orthogonal system in $L^2(\R,(1+x^2)^{-K-1}\dd x)$, so any
polynomial of degree $n\le K$ has a unique expansion in this finite basis. As
$\Rom{n}{\alpha_K}{-K}$ has degree $n$, we can write
\begin{equation}\label{conn1}
\Rom{n}{\alpha_K}{-K}(x)=\sum_{m=0}^n c_{nm}^K\,\Rom{m}{0}{-K}(x),
\qquad
c_{nm}^K=\frac{\bigl\langle \Rom{n}{\alpha_K}{-K},\Rom{m}{0}{-K}\bigr\rangle_{w^{0,-K}}}
{\bigl\langle \Rom{m}{0}{-K},\Rom{m}{0}{-K}\bigr\rangle_{w^{0,-K}}}.
\end{equation}

\begin{proposition}\label{prop:connection}
For each fixed level $K$ the $(K+1)\times(K+1)$ matrix $(c_{nm}^K)_{0\le n,m\le K}$ is
unit lower triangular, that is, $c_{nm}^K=0$ for $m>n$ and $c_{nn}^K=1$.
\end{proposition}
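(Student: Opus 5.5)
The plan is to rely only on the facts that $\{\Rom{m}{0}{-K}\}_{m=0}^K$ is a basis of the polynomials of degree at most $K$, with each basis element monic of exact degree $m$, and that the expansion \eqref{conn1} is unique. No explicit computation of $c_{nm}^K$ is needed.

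First, I justify that \eqref{conn1} is well posed. By Remark~\ref{rem:normalization}, $\beta=-K$ satisfies the nondegeneracy condition \eqref{eq:nondeg} at every degree $0\le m\le K$. Hence each $\Rom{m}{0}{-K}$ is a well defined monic polynomial of degree exactly $m$, as is $\Rom{n}{\alpha_K}{-K}$. The family $\{\Rom{m}{0}{-K}\}_{m=0}^K$ has the triangular property (degree exactly $m$), so it is a basis of the space $\Pi_K$ of real polynomials of degree $\le K$. By the orthogonality established in the finite orthogonality subsection (the case $\alpha_K=0$), the coefficients of the expansion of any $p\in\Pi_K$ are exactly the Fourier quotients in \eqref{conn1}. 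The norms in the denominators are finite and positive, since the integrand $(\Rom{m}{0}{-K})^2 w^{0,-K}$ is $O(|x|^{2m-2K-2})$ with $2m-2K-2\le -2$, and the integrand is nonnegative and not identically zero.

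Second, I prove $c_{nm}^K=0$ for $m>n$. Since $\Rom{n}{\alpha_K}{-K}$ has degree $n<m$, it lies in $\mathrm{span}\{\Rom{j}{0}{-K}\}_{j=0}^{n}$. Orthogonality of $\Rom{m}{0}{-K}$ to every $\Rom{j}{0}{-K}$ with $j<m$ then gives a vanishing numerator in \eqref{conn1}. Equivalently, by uniqueness, the expansion of $\Rom{n}{\alpha_K}{-K}$ only involves indices $m\le n$.

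Third, I prove $c_{nn}^K=1$ by comparing the coefficients of $x^n$ on both sides of \eqref{conn1}. Only the $m=n$ term contributes to $x^n$, and both $\Rom{n}{\alpha_K}{-K}$ and $\Rom{n}{0}{-K}$ are monic, so $c_{nn}^K=1$.

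The only genuinely delicate point is the first step. There one must check that the inner products converge and that the norms do not vanish, precisely at the top of the level, $m=K$, where the decay is borderline $O(|x|^{-2})$. That is still integrable, so the step goes through.
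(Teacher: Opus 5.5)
Your proposal is correct and follows the paper's proof: the $\Rom{m}{0}{-K}$ form a graded monic basis of the polynomials of degree $\le K$, so the expansion of $\Rom{n}{\alpha_K}{-K}$ involves only $m\le n$, and comparing the coefficients of $x^n$ gives $c_{nn}^K=1$. The only difference is that you also check that the quotients in \eqref{conn1} are well defined (convergent inner products, positive norms, orthogonality at $\alpha=0$), which the paper takes for granted from the finite orthogonality discussion that precedes the statement.
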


\begin{proof}
The $\Rom{m}{0}{-K}$ are monic of degrees $0,\dots,K$, so they form a graded basis of
$\mathcal P_K$. Therefore the expansion of $\Rom{n}{\alpha_K}{-K}\in\mathcal P_n$
involves only $m\le n$, and comparing the leading coefficients of the two monic sides
gives $c_{nn}^K=1$.
\end{proof}

The connection coefficients between Jacobi polynomials with different parameters are
classical \cite{Askey,Rainville}, and the coefficients below could be obtained from
them, in principle, through the formal substitution, involving complex parameters, 
recalled in Section~\ref{sec:basic}.
We prefer the following derivation, which produces a form adapted to the monic
normalization. The first step consists in inverting the monomial basis.

\begin{lemma}\label{lem:monomial-inversion}
Let $\beta\in\R$ be such that the monic polynomials $\Rom{m}{0}{\beta}$, $0\le m\le r$,
are nondegenerate in the sense of Remark~\ref{rem:normalization} and such that
$\bigl(\beta+r-2s+\ha\bigr)_s\neq0$ for $0\le s\le\lfloor r/2\rfloor$. Then, for
every $r\ge0$,
\begin{equation}
x^r=\sum_{s=0}^{\lfloor r/2\rfloor}\binom{r}{2s}
\frac{(-1)^s\left(\ha\right)_s}{\left(\beta+r-2s+\ha\right)_s}\,\Rom{r-2s}{0}{\beta}(x)\,.
\label{eq:monomial-inversion}
\end{equation}
\end{lemma}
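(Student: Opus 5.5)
The plan is induction on $r$, driven by the three-term recurrence of Proposition~\ref{prop:3TR-statement} at $\alpha=0$. Setting $\alpha=0$ in \eqref{eq:3TR-coeffs} gives $b_m=0$. The factor $\alpha^2+4(\beta+m-1)^2$ then cancels against $4(\beta+m-1)^2$, leaving
\[
x\,\Rom{m}{0}{\beta}=\Rom{m+1}{0}{\beta}+\gamma_m\Rom{m-1}{0}{\beta},\qquad
\gamma_m=-\frac{m(2\beta+m-2)}{(2\beta+2m-3)(2\beta+2m-1)} .
\]
Write $a_{r,s}$ for the claimed coefficient in \eqref{eq:monomial-inversion}. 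The case $r=0$ is trivial, and $r=1$ is $x=\Rom{1}{0}{\beta}$, which holds because $b_0=0$.

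For the inductive step I write $x^r=x\cdot x^{r-1}$ and expand $x^{r-1}=\sum_s a_{r-1,s}\Rom{r-1-2s}{0}{\beta}$. I then apply the recurrence to each $x\Rom{r-1-2s}{0}{\beta}$. Only degrees $\le r$ occur, so nondegeneracy at degrees $\le r$ is what is needed, and this is the hypothesis of the lemma. Parity is automatic: only $\Rom{r-2s}{0}{\beta}$ appear. Collecting the coefficient of $\Rom{r-2s}{0}{\beta}$ reduces the lemma to the identity
\[
a_{r,s}=a_{r-1,s}+\gamma_{r-2s+1}\,a_{r-1,s-1},
\]
with the conventions $a_{r-1,-1}=0$ and $a_{r-1,s}=0$ for $s>\lfloor (r-1)/2\rfloor$. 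The edge cases are these. At $s=0$ the identity is $1=1$. At $s=r/2$ with $r$ even there is no $a_{r-1,s}$ term, and the identity must come entirely from $\gamma_1a_{r-1,s-1}$. That case must be checked separately, and it also works because $\binom{r-1}{r}=0$.

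The main work is verifying this identity, and I expect it to be the main obstacle. Put $m=r-2s$ and $p=\beta+m+\tfrac12$. Then $a_{r,s}=\binom{r}{2s}(-1)^s(\tfrac12)_s/(p)_s$, and
\[
a_{r-1,s}=\binom{r-1}{2s}\frac{(-1)^s(\tfrac12)_s}{(p-1)_s},\qquad
a_{r-1,s-1}=\binom{r-1}{2s-2}\frac{(-1)^{s-1}(\tfrac12)_{s-1}}{(p+1)_{s-1}} .
\]
Here $\gamma_{m+1}=-(m+1)(2\beta+m-1)/\bigl((2\beta+2m-1)(2\beta+2m+1)\bigr)=-(m+1)(p+s-\tfrac32-s+\ldots)$. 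Concretely, I will use $2\beta+2m\pm1=2(p-\tfrac12\pm\tfrac12)$ to rewrite the denominators as $4(p-1)p$. I will then divide through by $(-1)^s(\tfrac12)_s/(p-1)_{s+1}$, bring the three Pochhammer ratios to a common denominator $(p-1)(p)_s$, and use the binomial relations $\binom{r-1}{2s}=\frac{r-2s}{r}\binom r{2s}$ and $\binom{r-1}{2s-2}=\frac{2s(2s-1)}{r(r-2s+1)}\binom r{2s}$. After these reductions the identity should collapse to a polynomial identity linear in $p$, namely (up to factors) $(p-1)+\ldots$ matching $(p+s-1)$, which can be checked by expanding both sides.

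Finally, the Pochhammer hypotheses at level $r-1$ are not literally implied by those at level $r$, because the shift differs. To avoid this I will carry out the induction for generic $\beta$, regarding both sides as rational functions of $\beta$. This yields \eqref{eq:monomial-inversion} as an identity of rational functions in $\beta$. I then specialize to any $\beta$ satisfying the stated hypotheses: there both sides are continuous in $\beta$, since the $\Rom{m}{0}{\beta}$ have coefficients rational in $\beta$ that are finite under nondegeneracy, by \eqref{eq:CoeffRecursionSolved}. This is consistent with the convention of Remark~\ref{rem:cancellation}.
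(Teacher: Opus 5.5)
Your argument is correct, but it takes a different route from the paper.

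\textbf{The paper's route.} The paper derives the coefficients analytically. It expands $x^r$ in the orthogonal basis and uses the $\alpha=0$ Rodrigues formula. After $m$ integrations by parts the coefficient becomes a ratio of beta integrals. This is valid first for $\beta<\frac{1-2r}{2}$, where the integrals converge and the boundary terms vanish, and is then extended by rational continuation in $\beta$.

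\textbf{Your route.} You instead verify the stated answer by induction on $r$. You use the $\alpha=0$ three-term recurrence, where $b_m=0$, and reduce everything to the Pascal-type identity $a_{r,s}=a_{r-1,s}+\gamma_{r-2s+1}a_{r-1,s-1}$. Invoking Proposition~\ref{prop:3TR-statement} is not circular. Its proof, Corollary~\ref{cor:3TR}, rests only on the ladder theorems of Section~\ref{sec:recurrences}, which never use this lemma.

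\textbf{Closing your unfinished step.} The step you left vague does close, and it becomes linear in $p$ as you predicted. Your displayed expression for $\gamma_{m+1}$ is garbled. With $m=r-2s$ and $p=\beta+m+\ha$ one has $2\beta+m-1=2p-m-2$, so $\gamma_{m+1}=-(m+1)(2p-m-2)/\bigl(4p(p-1)\bigr)$. Proceed as follows:
\begin{itemize}
\item Divide by $(-1)^s$.
\item Multiply by $(p-1)(p)_s/(\ha)_{s-1}$, using $(p-1)(p)_s/(p-1)_s=p+s-1$ and $(p)_s=p\,(p+1)_{s-1}$.
\item Apply your two binomial ratios.
\item Multiply by $r/(s-\ha)$.
\end{itemize}
The identity for $s\ge1$ becomes $r(p-1)=m(p+s-1)+s(2p-m-2)$. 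This holds because the right side equals $(m+2s)(p-1)$. The same computation at $m=0$ covers your edge case $s=r/2$.

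\textbf{What each route buys.} The paper's route produces the coefficients without knowing them in advance and explains why they are beta-function ratios. The price is convergence restrictions and boundary-term control. Yours is purely algebraic and needs no integrals, but it only verifies a formula you already have. Both proofs finish with a continuation in $\beta$. In your case it reduces to continuity at the given $\beta$. The nondegeneracy hypothesis keeps the recursion denominators \eqref{eq:CoeffRecursionSolved} nonzero nearby, and the Pochhammer hypothesis does the same for the $a_{r,s}$.
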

\begin{proof}
Let us expand $x^r$ in the orthogonal monic basis and use the monic form of
\eqref{eq:RomanovskiRodrigues} at $\alpha=0$, that is,
\[
\Rom{m}{0}{\beta}(x)=\frac{(1+x^{2})^{1-\beta}}{(2\beta+m-1)_m}
\frac{\dd^{m}}{\dd x^{m}}(1+x^{2})^{\beta+m-1}.
\]
Assume first that $\beta<\tfrac{1-2r}{2}$, so that all the integrals converge absolutely
and all the boundary terms vanish. After $m$ integrations by parts the parity forces
$r-m=2s$, and the remaining integral is the beta integral
$\int_{\R}x^{2s}(1+x^{2})^{\beta+m-1}\dd x=B\bigl(s+\ha,\ha-\beta-m-s\bigr)$. Dividing by
the squared norm, which is the same integral at $s=0$, the Rodrigues constant cancels and
one gets the displayed coefficient. Finally, for a fixed $r$ both sides are polynomials in
$x$ whose coefficients are rational functions of $\beta$, so the identity extends from
that range by rational continuation, with Remark~\ref{rem:cancellation} fixing the
removable singularities.
\end{proof}

\begin{lemma}\label{lem:anr}
Set $A:=\beta+\tfrac{\ii\alpha}{2}$ and $N:=n+2\beta-1$, and write
$\Rom{n}{\alpha}{\beta}(x)=\sum_{r=0}^{n}a_{nr}^{(\alpha,\beta)}x^r$, so that
$a_{nr}^{(\alpha,\beta)}=r_{n,r}$ in the notation of \eqref{eq:CoeffRecurrence}. Then
\begin{equation}
a_{nr}^{(\alpha,\beta)}=\ii^{\,r-n}(-1)^r2^{\,n-r}
\frac{(A)_n(-n)_r(N)_r}{r!\,(A)_r(N)_n}\;
\pFq{2}{1}\!\left(\begin{matrix} r-n,\;N+r\\ A+r\end{matrix};\ \ha\right).
\label{eq:Romanovski-monomial-coefficients}
\end{equation}
\end{lemma}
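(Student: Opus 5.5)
The plan is to identify the monic polynomial $\Rom{n}{\alpha}{\beta}$ with a terminating Gauss series in the complex variable $z:=(1-\ii x)/2$. I would then expand that series in powers of $x$ and read off the coefficients. No connection with Jacobi polynomials is needed; only the differential equation \eqref{eq:RomanovskiDE} and the uniqueness in Remark~\ref{rem:normalization} are used.

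\emph{Step 1 (change of variable).} With $x=\ii(2z-1)$ one has $1+x^2=4z(1-z)$ and $\dd/\dd x=-\tfrac{\ii}{2}\,\dd/\dd z$. Substituting into \eqref{eq:RomanovskiDE} and multiplying by $-1$ should give
\[
z(1-z)y_{zz}+\bigl[(\beta+\tfrac{\ii\alpha}{2})-2\beta z\bigr]y_z+n(n+2\beta-1)y=0 .
\]
This is the hypergeometric equation with $a=-n$, $b=N=n+2\beta-1$ and $c=A$. The checks are $a+b+1=2\beta$ and $-ab=n(n+2\beta-1)$. Hence, for generic parameters, $F(z):=\pFq{2}{1}(-n,N;A;z)$ is a polynomial of degree $n$ in $x$ solving \eqref{eq:RomanovskiDE}. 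By the uniqueness in Remark~\ref{rem:normalization} (valid under \eqref{eq:nondeg} at degree $n$), $\Rom{n}{\alpha}{\beta}$ equals $F$ divided by its leading coefficient. A polynomial in $x$ with complex coefficients solving the real equation is still proportional to the real monic solution, since the coefficient recursion \eqref{eq:CoeffRecursionSolved} is linear.

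\emph{Step 2 (expansion in $x$).} Write $z^k=\sum_{r\le k}\binom{k}{r}(-\ii/2)^r(1/2)^{k-r}x^r$ and insert this into $F=\sum_k\frac{(-n)_k(N)_k}{(A)_k k!}z^k$. Put $k=r+j$ and use $(a)_{r+j}=(a)_r(a+r)_j$ and $\binom{r+j}{r}/(r+j)!=1/(r!\,j!)$. The coefficient of $x^r$ then becomes
\[
\frac{(-n)_r(N)_r}{(A)_r\,r!}\Bigl(-\frac{\ii}{2}\Bigr)^{r}\,\pFq{2}{1}\!\left(\begin{matrix} r-n,\;N+r\\ A+r\end{matrix};\ \ha\right).
\]
At $r=n$ the ${}_2F_1$ equals $1$ and $(-n)_n/n!=(-1)^n$. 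So the leading coefficient is $(-1)^n(N)_n(A)_n^{-1}(-\ii/2)^n$, and this is nonzero because $(N)_n\neq0$ is exactly \eqref{eq:nondeg} at $k=n-1,\dots$ (the factors $2\beta+n+k-1$). Dividing, the prefactor becomes $(-\ii)^{r-n}(-1)^n2^{n-r}=\ii^{r-n}(-1)^r2^{n-r}$, which is the prefactor in \eqref{eq:Romanovski-monomial-coefficients}.

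\emph{Step 3 (degenerate values).} The main obstacle is that $(A)_k$ may vanish. This can only happen when $\alpha=0$ and $\beta$ is a nonpositive integer or half-integer, which includes the case $\beta=-K$ used later. I would first prove the identity for $\alpha\neq0$, where $A\notin\R$ and all Pochhammer symbols $(A)_k$ are nonzero. For fixed $n,r$, both sides are rational functions of $(\alpha,\beta)$: the left side by \eqref{eq:CoeffRecursionSolved}, the right side as a finite sum. The identity therefore extends to $\alpha=0$ by rational continuation, in the sense of Remark~\ref{rem:cancellation}.

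One must still check that the right-hand side has only removable singularities there. Combining $(A)_n/\bigl((A)_r(A+r)_j\bigr)=(A+r+j)_{n-r-j}$ shows that every term is in fact polynomial in $A$. This explains why the displayed form needs the cancellation convention. As a final sanity check, I would compare the case $r=n-1$ with \eqref{eq:subleading-coeff}.
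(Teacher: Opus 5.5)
Your proof is correct and is essentially the paper's: the paper obtains the same terminating series ${}_2F_1(-n,N;A;(1-\ii x)/2)$ by quoting the hypergeometric representation of $P_n^{(\beta-1+\ii\alpha/2,\,\beta-1-\ii\alpha/2)}(\ii x)$, whereas you derive it directly from \eqref{eq:RomanovskiDE} via $x=\ii(2z-1)$, a self-contained variant that avoids the formal Jacobi continuation; both then use the same binomial expansion, sum reversal with $(a)_{r+j}=(a)_r(a+r)_j$, normalization, and continuation at the exceptional values. One side remark needs correcting: the denominators $(A)_r$ and $(A+r)_j$ vanish only when $\alpha=0$ and $\beta\in\{0,-1,\dots,-(n-1)\}$, never at half-integers, so the level case $\beta=-K$ with $n\le K$ is \emph{not} exceptional and \eqref{eq:Romanovski-monomial-coefficients} applies there as it stands.
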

\begin{proof}
Insert the hypergeometric representation of
$P_n^{(\beta-1+i\alpha/2,\,\beta-1-i\alpha/2)}(ix)$, normalize its leading coefficient
to one, expand $(1-ix)^k$ and reverse the two finite sums. The Pochhammer identity
$(a)_{j+r}=(a)_r(a+r)_j$ gives the terminating ${}_2F_1$. The property of being real follows independently from the real coefficient recurrence \eqref{eq:CoeffRecurrence}, 
and the exceptional parameter values are read as in Remark~\ref{rem:cancellation}.
\end{proof}
Here all the Pochhammer symbols are the finite products \eqref{eq:pochhammer}, so that
$(A)_n$, $(A)_r$ and $(A+r)_j$ make sense although $A$ is complex, and $(-n)_r$ makes
sense although $-n$ is a nonpositive integer. As for the denominators, the factors of
$(N)_n$ are $2\beta+n+k-1$ with $0\le k<n$, so that $(N)_n\neq0$ is precisely the
nondegeneracy condition \eqref{eq:nondeg} at degree $n$; and when $\alpha\neq0$ every
factor of $(A)_r$ and of $(A+r)_j$ has imaginary part $\alpha/2\neq0$, hence it is nonzero
as well. Consequently the only apparent singularities of
\eqref{eq:Romanovski-monomial-coefficients} at nondegenerate parameters occur when
$\alpha=0$ and $\beta\in\{0,-1,\dots,-(n-1)\}$, and there
$a_{nr}^{(\alpha,\beta)}$ is \emph{defined} as the coefficient produced by
\eqref{eq:CoeffRecursionSolved}, that is, the value obtained after cancellation and
continuation, in such a way that the coefficient is well defined in the whole
nondegenerate range. Note that this never happens in the level case $\beta=-K$ with
$0\le n\le K$, where \eqref{eq:Romanovski-monomial-coefficients} may be used as it stands.

\begin{theorem}\label{thm:connection-coefficients}
Let $\alpha,\beta\in\R$ verify the hypotheses of Lemma~\ref{lem:monomial-inversion} for
$r\le n$, and define $c_{nm}(\alpha,\beta)$ by
$\Rom{n}{\alpha}{\beta}=\sum_{m=0}^{n}c_{nm}(\alpha,\beta)\Rom{m}{0}{\beta}$. Then
\begin{equation}
c_{nm}(\alpha,\beta)=\sum_{s=0}^{\lfloor(n-m)/2\rfloor}
a_{n,m+2s}^{(\alpha,\beta)}\binom{m+2s}{m}
\frac{(-1)^s\left(\ha\right)_s}{\left(\beta+m+\ha\right)_s}\,,
\label{eq:connection-coefficients-closed}
\end{equation}
with $a_{n,m+2s}^{(\alpha,\beta)}$ given by \eqref{eq:Romanovski-monomial-coefficients}.
Moreover, $c_{nm}(\alpha,\beta)$ is a polynomial in $\alpha$ of degree at most $n-m$,
and it has the same parity as $n-m$,
\begin{equation}
c_{nm}(-\alpha,\beta)=(-1)^{n-m}c_{nm}(\alpha,\beta).
\label{eq:connection-coefficients-parity}
\end{equation}
\end{theorem}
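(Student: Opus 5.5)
The closed formula \eqref{eq:connection-coefficients-closed} should follow by composing the two lemmas. I would write $\Rom{n}{\alpha}{\beta}(x)=\sum_{r=0}^{n}a_{nr}^{(\alpha,\beta)}x^r$ and then replace each $x^r$, $r\le n$, by its expansion from Lemma~\ref{lem:monomial-inversion}. The hypotheses of the theorem are exactly those of that lemma for every $r\le n$.

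This produces a double sum over $(r,s)$ containing $\Rom{r-2s}{0}{\beta}$. I would then set $m=r-2s$, so that $r=m+2s$, and exchange the order of summation. For fixed $m$, the index $s$ runs over $0\le s\le\lfloor(n-m)/2\rfloor$. The binomial $\binom{r}{2s}=\binom{m+2s}{2s}=\binom{m+2s}{m}$, and the Pochhammer denominator becomes $(\beta+r-2s+\ha)_s=(\beta+m+\ha)_s$. Because the $\Rom{m}{0}{\beta}$ are monic of distinct degrees, they form a basis of $\mathcal P_n$, so the coefficients are unique. Reading off the coefficient of $\Rom{m}{0}{\beta}$ then gives the formula.

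For the polynomiality and parity claims, I would avoid \eqref{eq:Romanovski-monomial-coefficients} altogether, since its complex parameter $A$ obscures the $\alpha$-dependence. Instead I would use the real recurrence \eqref{eq:CoeffRecursionSolved}. Its denominators $(k-n)(k+n+2\beta-1)$ do not depend on $\alpha$, and the only place $\alpha$ enters is the term $\alpha(k+1)r_{n,k+1}$.

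The claim is: $r_{n,k}$ is a polynomial in $\alpha$ of degree at most $n-k$, and $r_{n,k}(-\alpha)=(-1)^{n-k}r_{n,k}(\alpha)$. I would prove this by descending induction on $k$.
\begin{itemize}
\item The base cases are $r_{n,n}=1$ and $r_{n,n+1}=0$.
\item In the inductive step, $r_{n,k+2}$ has degree at most $n-k-2$ and parity $(-1)^{n-k}$.
\item Likewise $\alpha\,r_{n,k+1}$ has degree at most $1+(n-k-1)=n-k$ and parity $(-1)\cdot(-1)^{n-k-1}=(-1)^{n-k}$.
\item Both terms in the numerator therefore have the required degree bound and parity, and so does $r_{n,k}$.
\end{itemize}
The continued values at removable singularities are the ones given by this recurrence (Remark~\ref{rem:cancellation}), so the argument covers them as well.

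The inversion coefficients in \eqref{eq:connection-coefficients-closed} do not depend on $\alpha$. Hence $c_{nm}$ is a linear combination of $a_{n,m+2s}$ with $\alpha$-free weights. Each such term has degree at most $n-m-2s\le n-m$ and parity $(-1)^{n-m-2s}=(-1)^{n-m}$, which gives \eqref{eq:connection-coefficients-parity}.

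An alternative route to parity is the symmetry $x\mapsto -x$. The equation \eqref{eq:RomanovskiDE} with $\alpha$ becomes the equation with $-\alpha$, so $(-1)^n\Rom{n}{\alpha}{\beta}(-x)=\Rom{n}{-\alpha}{\beta}(x)$, and $\Rom{m}{0}{\beta}$ has parity $(-1)^m$. Uniqueness of the expansion then yields the parity relation directly. I would include this as a cross-check.

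No step here is a serious obstacle. The only delicate point is bookkeeping: confirming that the re-indexing covers exactly the range $0\le s\le\lfloor (n-m)/2\rfloor$, and that the formula from Lemma~\ref{lem:anr} is used only where it agrees with the recurrence values.
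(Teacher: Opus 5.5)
Your proposal is correct and follows the paper's proof. The closed formula comes from expanding $\Rom{n}{\alpha}{\beta}$ in monomials, inverting each $x^r$ by Lemma~\ref{lem:monomial-inversion}, reindexing with $r=m+2s$, and the coefficient recurrence \eqref{eq:CoeffRecurrence} gives the bound on the $\alpha$-degree; the one difference is that you get the parity \eqref{eq:connection-coefficients-parity} from the same descending induction on $r_{n,k}$, whereas the paper uses the $x\mapsto-x$, $\alpha\mapsto-\alpha$ symmetry with uniqueness, which is your cross-check.
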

\begin{proof}
Insert $\Rom{n}{\alpha}{\beta}=\sum_r a_{nr}^{(\alpha,\beta)}x^r$ and apply
Lemma~\ref{lem:monomial-inversion} to each monomial. Collecting the coefficient of
$\Rom{m}{0}{\beta}$ amounts to setting $r=m+2s$, and $\binom{m+2s}{2s}=\binom{m+2s}{m}$.
As for the parity statement, \eqref{eq:RomanovskiDE} is invariant under $x\mapsto-x$,
$\alpha\mapsto-\alpha$, so by uniqueness
$\Rom{n}{-\alpha}{\beta}(x)=(-1)^n\Rom{n}{\alpha}{\beta}(-x)$; since
$\Rom{m}{0}{\beta}(-x)=(-1)^m\Rom{m}{0}{\beta}(x)$, comparison gives
\eqref{eq:connection-coefficients-parity}. Finally, \eqref{eq:CoeffRecurrence} shows
that $a_{nr}^{(\alpha,\beta)}$ is a polynomial in $\alpha$ of degree at most $n-r$, and
in each summand $r=m+2s$, in such a way that the degree is at most $n-m$.
\end{proof}

\begin{remark}
For a fixed subdiagonal $n-m=j$, \eqref{eq:connection-coefficients-closed} contains only
$\lfloor j/2\rfloor+1$ terms. This is the reason why the coefficients near the diagonal
admit the simple expression (which we will use later) given next.
\end{remark}

\begin{corollary}
\label{thm:subleading}
With $\beta=\beta_K=-K$ and $\ell=K-n$,
\begin{equation}
c_{n,\,n-1}^{K}=-\,\frac{n\,\alpha_K}{2(\ell+1)}\,,
\qquad
c_{n,\,n-2}^{K}=\frac{n(n-1)\alpha_K^{2}}{4(\ell+1)(2\ell+3)}\,.
\label{eq:subleading}
\end{equation}
\end{corollary}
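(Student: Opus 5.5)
The plan is to specialize the closed formula \eqref{eq:connection-coefficients-closed} of Theorem~\ref{thm:connection-coefficients} to the first two subdiagonals, where it has very few terms. For the monomial coefficients I will use the elementary expressions \eqref{eq:subleading-coeff} for $r_{n,n-1}$ and $r_{n,n-2}$, instead of the hypergeometric form \eqref{eq:Romanovski-monomial-coefficients}. Only after that will I substitute $\beta=\beta_K=-K$ and $\beta+n-1=-(\ell+1)$.

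First I check that the hypotheses of Lemma~\ref{lem:monomial-inversion} hold at $\beta=-K$ with $r\le n\le K$. Nondegeneracy is Remark~\ref{rem:normalization}. The factors of $\bigl(\beta+r-2s+\ha\bigr)_s$ are half-integers, hence never zero. So Theorem~\ref{thm:connection-coefficients} applies.

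For $m=n-1$ we have $\lfloor 1/2\rfloor=0$, so only $s=0$ contributes, and $c^K_{n,n-1}=a_{n,n-1}=r_{n,n-1}=n\alpha_K/\bigl(2(\beta+n-1)\bigr)$. Substituting $\beta+n-1=n-K-1=-(\ell+1)$ gives the first formula in \eqref{eq:subleading}.

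For $m=n-2$ there are two terms, $s=0$ and $s=1$.
\begin{itemize}
\item The $s=0$ term is $r_{n,n-2}$, taken from \eqref{eq:subleading-coeff}.
\item The $s=1$ term is $a_{n,n}\binom{n}{n-2}(-1)\bigl(\ha\bigr)_1/\bigl(\beta+n-2+\ha\bigr)$. Since $a_{n,n}=1$, this equals $-n(n-1)/\bigl(2(2\beta+2n-3)\bigr)$.
\end{itemize}
Now split $r_{n,n-2}$ according to the bracket $2(\beta+n-1)+\alpha^2$. The part coming from $2(\beta+n-1)$ is exactly $n(n-1)/\bigl(2(2\beta+2n-3)\bigr)$, which cancels the $s=1$ term. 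What remains is
\[
c^K_{n,n-2}=\frac{n(n-1)\alpha_K^2}{4(\beta+n-1)(2\beta+2n-3)} .
\]
With $\beta=-K$ we have $\beta+n-1=-(\ell+1)$ and $2\beta+2n-3=-(2\ell+3)$. The two signs cancel and give the second formula.

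The only real point of care is this cancellation in the $m=n-2$ case. One must also confirm that no denominator vanishes, and it does not: $\ell+1\ge1$ and $2\ell+3\ge3$, so Remark~\ref{rem:cancellation} is not needed here.

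As a consistency check, the results agree with the parity statement \eqref{eq:connection-coefficients-parity}: the first coefficient is odd in $\alpha_K$ and the second is even. They also agree with the degree bound, since each is a polynomial in $\alpha_K$ of degree $n-m$.
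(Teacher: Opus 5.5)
Your proposal is correct and follows essentially the same route as the paper. For $m=n-1$ you keep only the term $s=0$ of \eqref{eq:connection-coefficients-closed}, and for $m=n-2$ the terms $s\in\{0,1\}$, using \eqref{eq:subleading-coeff} so that the $2(\beta+n-1)$ part of $r_{n,n-2}$ cancels the $s=1$ term, exactly as in \eqref{eq:second-subdiagonal}, before setting $\beta=-K$. Your explicit check of the hypotheses of Lemma~\ref{lem:monomial-inversion} at $\beta=-K$ (half-integer Pochhammer factors, and nondegeneracy from Remark~\ref{rem:normalization}) is a small step that the paper leaves implicit.
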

\begin{proof}
For $m=n-1$ only $s=0$ occurs, so $c^K_{n,n-1}=a^{(\alpha_K,-K)}_{n,n-1}=r_{n,n-1}$,
which by \eqref{eq:subleading-coeff} and $\beta_K+n-1=-(\ell+1)$ equals
$-n\alpha_K/(2(\ell+1))$. For $m=n-2$ there appear two terms, and
\begin{equation}
c_{n,n-2}(\alpha,\beta)=r_{n,n-2}-\frac{n(n-1)}{4\left(\beta+n-\tfrac32\right)}
=\frac{n(n-1)\alpha^{2}}{4(\beta+n-1)(2\beta+2n-3)},
\label{eq:second-subdiagonal}
\end{equation}
by \eqref{eq:subleading-coeff}. Now, putting $\beta=-K$ one gets the second formula.
\end{proof}

Combining \eqref{conn1} with the identity \eqref{eq:master-K} and $\ell_m=K-m$ we
obtain the expansion of the dressed two parameter object along the anti-diagonal,
\begin{equation}\label{eq:wf-transfer}
\sin^K\!\chi\;\Rom{n}{\alpha_K}{\beta_K}(\cot\chi)
=\sum_{m=0}^{n}c_{nm}^{K}\,c(m,\ell_m)\,
\sin^{\ell_m}\!\chi\;\Geg{m}{\ell_m+1}(\cos\chi),
\end{equation}
which is a finite combination of Gegenbauer polynomials with varying degree \emph{and}
varying order along the anti-diagonal $m+\ell_m=K$.

\section{The lattice of dressed functions and its ladder operators}
\label{sec:ladders}

Theorem~\ref{thm:master} puts the classical contiguous relations of the Gegenbauer
polynomials at our disposal for the $\alpha=0$ Romanovski polynomials $\Rom{n}{0}{-K}$,
which, in view of \eqref{eq:harmonics}, we call \emph{hyperspherical}. We next describe
the four elementary moves they generate on the lattice indexed by the
degree $n$ and the order $\ell$. 

Here and below, by a \emph{dressing} we mean multiplication, after the change of variable $x=\cot\chi$, by a prescribed nonpolynomial factor in $\chi$. Different dressings are used for different purposes. The factor $\sin^n\chi$ in \eqref{eq:master} removes the factor $(1+x^2)^{n/2}$ occurring in the Romanovski--Gegenbauer correspondence, whereas the factor $\sin^K\chi$ in \eqref{eq:master-K} produces the usual hyperspherical expression. For the first order ladder calculus we are going to develop, it is convenient to include one additional factor of $\sin\chi$. We therefore use the term \emph{dressed functions} specifically for the ladder-adapted family 
\begin{equation}\label{eq:fKl}
f_{K\ell}(\chi) = \sin^{\ell+1}\chi\, \Geg{n}{\ell+1}(\cos\chi) = c(n,\ell)^{-1}\sin^{K+1}\chi\, \Rom{n}{0}{-K}(\cot\chi)\,, 
\end{equation}
(with $K=n+\ell$ and $\chi\in(0,\pi)$).
The extra factor $\sin\chi$ is precisely what leads to the first order operators displayed below.

Throughout this section we work with the extension to the boundary of the lattice given
by
\begin{equation}
f_{K\ell}:\equiv0\qquad\text{whenever }n=K-\ell<0,
\label{eq:fKl-boundary}
\end{equation}
which is the convention $\Geg{-1}{\lambda}\equiv0$ of Section~\ref{sec:basic}, read at
the first point outside the lattice. We need it because at the bottom of a ladder the
differential operator annihilates the state, while its nominal coefficient does not
vanish. The functions \eqref{eq:fKl} are, up to normalization, the quasi radial $S^3$
functions $S_{K\ell}$ times an extra $\sin\chi$, and it is the exponent $\ell+1$, rather
than $\ell$, what makes the operators below first order with trigonometric polynomial
coefficients. All the operators act on smooth functions on $(0,\pi)$, where $\cot\chi$
and $1/\sin\chi$ are smooth, and every displayed identity is an identity of functions
there. We decorate the operators with a tilde, in order to distinguish them from the
polynomial families.

The four families are obtained from a classical Gegenbauer relation, by means of a
dressing (the change of variable $u=\cos\chi$ followed by a conjugation with a power of 
$\sin\chi$),
and the transformation is elementary. Let us note that the resulting operators depend on
$n$ and $\ell$ only through $(K,\ell)$, and they are labeled accordingly.

\begin{proposition}[Vertical moves, $\Delta n=\pm1$, $\Delta\ell=0$]
\label{thm:vertical}
Let 
$$
\wt{J}^{\,K}_{\pm}:=\pm\sin\chi\,\tfrac{\dd}{\dd\chi}+(K+1)\cos\chi\,.
$$
Then, with
$K=n+\ell$,
\begin{equation}
\wt{J}^{\,K}_{-}\,f_{K\ell}=(n+2\ell+1)\,f_{(K-1)\ell},
\qquad
\wt{J}^{\,K}_{+}\,f_{K\ell}=(n+1)\,f_{(K+1)\ell}.
\label{eq:vertical}
\end{equation}
At the bottom of the column, $n=0$ and $K=\ell$, the first relation must be read with
\eqref{eq:fKl-boundary}. Note that the coefficient $2\ell+1$ does not vanish, but the
target $f_{(\ell-1)\ell}$ is zero, in agreement with
$\bigl[-\sin\chi\frac{\dd}{\dd\chi}+(\ell+1)\cos\chi\bigr]\sin^{\ell+1}\!\chi=0$.
\end{proposition}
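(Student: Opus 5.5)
We reduce both identities in \eqref{eq:vertical} to classical Gegenbauer relations by undoing the dressing. Write $f_{K\ell}=\sin^{\ell+1}\chi\,y(\cos\chi)$ with $y=\Geg{n}{\lambda}$ and $\lambda=\ell+1$. Then
\[
\sin\chi\,\tfrac{\dd}{\dd\chi}f_{K\ell}=\sin^{\ell+1}\chi\bigl[(\ell+1)\cos\chi\,y(u)-(1-u^2)y'(u)\bigr],\qquad u=\cos\chi .
\]
Hence, using $K+1-(\ell+1)=n$ and $K+1+\ell+1=n+2\lambda$,
\[
\wt J^{\,K}_{-}f_{K\ell}=\sin^{\ell+1}\chi\bigl[(1-u^2)y'+n\,u\,y\bigr],\qquad
\wt J^{\,K}_{+}f_{K\ell}=\sin^{\ell+1}\chi\bigl[-(1-u^2)y'+(n+2\lambda)\,u\,y\bigr].
\]
The prefactor $\sin^{\ell+1}\chi$ is unchanged, consistent with $\ell$ being fixed. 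So the statement is equivalent to the two polynomial identities
\[
(1-u^2)\Geg{n}{\lambda}{}'+nu\,\Geg{n}{\lambda}=(n+2\lambda-1)\Geg{n-1}{\lambda},\qquad
-(1-u^2)\Geg{n}{\lambda}{}'+(n+2\lambda)u\,\Geg{n}{\lambda}=(n+1)\Geg{n+1}{\lambda},
\]
with $n+2\lambda-1=n+2\ell+1$.

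These are the standard Gegenbauer differentiation formulas. Subtracting them shows they are jointly equivalent to the three-term recurrence $(n+1)\Geg{n+1}{\lambda}=2(n+\lambda)u\Geg{n}{\lambda}-(n+2\lambda-1)\Geg{n-1}{\lambda}$. It therefore suffices to prove one of them, say the lowering one, and the raising one then follows from the recurrence.

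For the lowering identity I would differentiate the generating function, or cite \cite[(4.7.27)]{Szego} in the form $(1-u^2)\Geg{n}{\lambda}{}'=-nu\Geg{n}{\lambda}+(n+2\lambda-1)\Geg{n-1}{\lambda}$. A self-contained check is also available. The left side $(1-u^2)y'+nuy$ has degree at most $n-1$, since the $u^{n+1}$ terms cancel: $-n k_n+nk_n=0$, with $k_n$ the leading coefficient in \eqref{eq:gegenbauer-leading}. Using \eqref{eq:gegenbauer-ode}, one verifies that this expression satisfies the Gegenbauer equation of degree $n-1$. This is the main computational step, essentially a commutation of the first-order operator with the second-order Gegenbauer operator. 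The constant is then fixed by comparing leading coefficients, computing $n k_{n-1}^{-1}\cdot$(coefficient of $u^{n-1}$). Evaluating at $u=1$ is simpler: there the left side is $n\Geg{n}{\lambda}(1)=n(2\lambda)_n/n!$, and this equals $(n+2\lambda-1)(2\lambda)_{n-1}/(n-1)!$, as required.

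Finally, the boundary case $n=0$ is handled separately. There $y=1$, so the lowering formula gives $0$, which matches \eqref{eq:fKl-boundary} and the convention $\Geg{-1}{\lambda}\equiv0$. This is precisely the observation recorded after the statement. The raising case at $n=0$ gives $(2\lambda)u=\Geg{1}{\lambda}$, which is direct. The value $\lambda=\ell+1\ge1$ avoids the $\lambda=0$ degeneration, so no other special cases arise.
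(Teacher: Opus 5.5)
Your proposal is correct and follows the paper's proof. You undo the dressing $\sin^{\ell+1}\chi$ at $u=\cos\chi$, which turns $\wt J^{\,K}_{\mp}$ into exactly the two Szeg\H{o} fixed-order degree-shifting relations with $\lambda=\ell+1$; the paper simply cites these, while your intertwining check with the Gegenbauer equation plus evaluation at $u=1$ is a valid self-contained substitute. One small slip: it is the \emph{sum} of the two Gegenbauer relations, not their difference, that gives the three-term recurrence $(n+1)\Geg{n+1}{\lambda}=2(n+\lambda)u\Geg{n}{\lambda}-(n+2\lambda-1)\Geg{n-1}{\lambda}$, because the derivative terms enter with opposite signs.
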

\begin{proof}
These are the two Szeg\H{o} degree shifting relations at fixed order
\cite[\S4.7]{Szego},
\[
\begin{split}
&\bigl[(1-u^2)\tfrac{\dd}{\dd u}+n u\bigr]\Geg{n}{\lambda}= (n+2\lambda-1)\Geg{n-1}{\lambda},
\\
&\bigl[-(1-u^2)\tfrac{\dd}{\dd u}+(n+2\lambda)u\bigr]\Geg{n}{\lambda}= (n+1)\Geg{n+1}{\lambda},
\end{split}
\]
written at $u=\cos\chi$ with $\lambda=\ell+1$ and conjugated by $\sin^{\ell+1}\chi$ (see
\ref{app:conjugation}). The second one has been multiplied by $-1$, in such a
way that both ladder coefficients are positive. Note that the conjugation replaces the
multipliers $n$ and $n+2\ell+2$ by the common value $n+\ell+1=K+1$, and this is the
reason why a single pair of operators covers the whole level.
\end{proof}

\begin{proposition}[Horizontal moves, $\Delta\ell=\pm1$, $\Delta n=0$]
\label{prop:horizontal}
Let
\[
\begin{split}
&\wt{H}^{\,K\ell}_{+}:=-\cos\chi\,\tfrac{\dd}{\dd\chi}+\tfrac{\ell+1}{\sin\chi}+(K+1)\sin\chi\,,\\
&\wt{H}^{\,K\ell}_{-}:=-\cos\chi\,\tfrac{\dd}{\dd\chi}-\tfrac{\ell}{\sin\chi}-(K+1)\sin\chi\, .
\end{split}
\]
Then, with $K=n+\ell$,
\begin{equation}
\begin{split}
& \wt{H}^{\,K\ell}_{+}\,f_{K\ell}=2(\ell+1)\,f_{(K+1)(\ell+1)}\,,
\\
& \wt{H}^{\,K\ell}_{-}\,f_{K\ell}=-\frac{(2\ell+n+1)(2\ell+n)}{2\ell}\,f_{(K-1)(\ell-1)}
\quad(\ell\ge1)\,.
\end{split}
\label{eq:horiz}
\end{equation}
\end{proposition}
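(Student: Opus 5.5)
The plan is to derive the raising relation from classical Gegenbauer identities, exactly as in Proposition~\ref{thm:vertical}, and then to get the lowering relation from a factorization argument instead of a second contiguous relation.

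\textbf{Raising relation.} Write $u=\cos\chi$, $\lambda=\ell+1$, and note that $f_{(K+1)(\ell+1)}=\sin^{\ell+2}\chi\,\Geg{n}{\ell+2}(u)$. The derivative rule $\tfrac{\dd}{\dd u}\Geg{n+1}{\lambda}=2\lambda\,\Geg{n}{\lambda+1}$ gives
\[
2(\ell+1)\,f_{(K+1)(\ell+1)}=\sin^{\ell+2}\chi\,\tfrac{\dd}{\dd u}\Geg{n+1}{\ell+1}(u).
\]
Next, express $\Geg{n+1}{\ell+1}$ through the second Szeg\H{o} relation quoted in the proof of Proposition~\ref{thm:vertical}:
\[
(n+1)\Geg{n+1}{\lambda}=\bigl[-(1-u^2)\tfrac{\dd}{\dd u}+(n+2\lambda)u\bigr]\Geg{n}{\lambda}.
\]
Differentiating this in $u$ produces a term in $\Geg{n}{\lambda}{}''$. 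I will eliminate it with the Gegenbauer equation \eqref{eq:gegenbauer-ode}, which leaves a first order expression in $\Geg{n}{\ell+1}$ and its $u$-derivative. The final step is to pass to $\chi$ using $\tfrac{\dd}{\dd u}=-\tfrac{1}{\sin\chi}\tfrac{\dd}{\dd\chi}$, and to conjugate by $\sin^{\ell+1}\chi$ as in \ref{app:conjugation}. Since $f_{K\ell}=\sin^{\ell+1}\chi\,\Geg{n}{\ell+1}$, this conjugation turns $\tfrac{\dd}{\dd\chi}$ into $\tfrac{\dd}{\dd\chi}-(\ell+1)\cot\chi$. I then check that the result is $\wt H^{\,K\ell}_{+}f_{K\ell}$. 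In that check, the multipliers $n$ and $n+2\ell+2$ should again combine into $K+1$, while the $\cot\chi$ terms collect into $(\ell+1)/\sin\chi$.

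\textbf{Lowering relation via factorization.} Rather than searching for a Gegenbauer identity that lowers the order at fixed degree, I will use the radial equation satisfied by $f_{K\ell}$. Conjugating \eqref{eq:gegenbauer-ode} at $\lambda=\ell+1$ by $\sin^{\ell+1}\chi$ should give
\[
f_{K\ell}''+\Bigl[(K+1)^2-\frac{\ell(\ell+1)}{\sin^2\chi}\Bigr]f_{K\ell}=0 .
\]
I will verify this by a short computation. Now let $\ell\ge1$ and apply the raising relation already proved to $f_{(K-1)(\ell-1)}$, which has the same degree $n$. This gives $f_{K\ell}=\tfrac{1}{2\ell}\,\wt H^{\,(K-1)(\ell-1)}_{+}f_{(K-1)(\ell-1)}$, and hence
\[
\wt H^{\,K\ell}_{-}f_{K\ell}=\tfrac{1}{2\ell}\,\wt H^{\,K\ell}_{-}\wt H^{\,(K-1)(\ell-1)}_{+}f_{(K-1)(\ell-1)} .
\]
The task is then to expand the composite operator. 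Its second order part is $\cos^2\chi\,\tfrac{\dd^2}{\dd\chi^2}$, and I will replace $f''$ using the radial equation at $(K-1,\ell-1)$, in which the centrifugal term is $\ell(\ell-1)/\sin^2\chi$.

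\textbf{Main obstacle.} The hardest step is this expansion. All first order terms must cancel, and all $\chi$-dependent zeroth order terms must collapse to a constant. The constant should come out as $-(n+2\ell)(n+2\ell+1)$, which gives the stated coefficient after division by $2\ell$. The bookkeeping involves terms in $1/\sin^2\chi$, $\cos^2\chi/\sin^2\chi$ and $\sin^2\chi$. I would organize it with $\cos^2\chi=1-\sin^2\chi$ and then compare coefficients of $\sin^{-2}\chi$, $1$, $\sin^2\chi$ and of $\tfrac{\dd}{\dd\chi}$ separately.

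If the cancellation of the derivative terms failed, the fallback is to prove the lowering relation directly. For that I would combine Szeg\H{o}'s order-lowering formula $4(\lambda-1)(n+\lambda)(1-u^2)\Geg{n}{\lambda}=(n+2\lambda-2)(n+2\lambda-1)\Geg{n}{\lambda-1}-(n+1)(n+2)\Geg{n+2}{\lambda-1}$ with the derivative rule, so as to eliminate $\Geg{n+2}{\lambda-1}$, and then dress the result as above.
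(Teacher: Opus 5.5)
Your proposal is correct, and it reaches the lowering relation by a different route from the paper.

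\textbf{Raising half.} This is essentially the paper's argument. Differentiating Szeg\H{o}'s degree-raising relation and eliminating $\Geg{n}{\lambda}{}''$ with \eqref{eq:gegenbauer-ode} gives exactly $[u\tfrac{\dd}{\dd u}+(n+2\lambda)]\Geg{n}{\lambda}=2\lambda\,\Geg{n}{\lambda+1}$. This is the first relation of \eqref{eq:geg-order}, which the paper quotes from the literature. Dressing it by $\sin^{\ell+2}\chi\,(\cdot)\,\sin^{-(\ell+1)}\chi$ gives $\wt H^{\,K\ell}_{+}$. Note that here $K+1$ arises as $(n+2\ell+2)-(\ell+1)$, via $\cos^2\chi/\sin\chi=1/\sin\chi-\sin\chi$, not from a multiplier $n$ as in the vertical case.

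\textbf{Lowering half.} This is where you genuinely differ. The paper dresses a second, independent order-lowering identity (the second line of \eqref{eq:geg-order}) by $\sin^{\lambda-1}\chi\,(\cdot)\,\sin^{-\lambda}\chi$. You instead factor through the raising move, and the obstacle you anticipate goes away:
\begin{itemize}
\item Write the two operators as $-\cos\chi\tfrac{\dd}{\dd\chi}+a$ and $-\cos\chi\tfrac{\dd}{\dd\chi}+b$. Then $a+b=-\sin\chi$, so the first-order terms of the product cancel identically.
\item After inserting the radial equation at $(K-1,\ell-1)$ and using $\cos^2\chi=1-\sin^2\chi$, the coefficients of $\sin^{-2}\chi$ and $\sin^{2}\chi$ vanish.
\item The remaining constant is $-(K+\ell)(K+\ell+1)=-(n+2\ell)(n+2\ell+1)$, as you predicted.
\end{itemize}
The fallback is therefore not needed.

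\textbf{Comparison.} Your route is more self-contained. It uses only the derivative rule, the Szeg\H{o} relation already invoked in Proposition~\ref{thm:vertical}, and the Gegenbauer equation. You never have to source or check the less standard identity with the $(\lambda-1)^{-1}$ prefactor. It also shows structurally why $\wt H^{\,K\ell}_{-}$ is the partner of the raising operator: $\wt H^{\,K\ell}_{-}\wt H^{\,(K-1)(\ell-1)}_{+}$ acts as a scalar on solutions of the radial equation, in analogy with Remark~\ref{inlevel:rem:composition}. The paper's route is shorter once both classical identities are granted. It also obtains the lowering relation independently of the raising one.
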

\begin{proof}
These come from the two order shifting relations, written in the form used in
\cite{Rosu},
\begin{equation}
\begin{split}
&\bigl[u\tfrac{\dd}{\dd u}+(2\lambda+n)\bigr]\Geg{n}{\lambda}=2\lambda\,\Geg{n}{\lambda+1},
\\
&\bigl[u(1-u^2)\tfrac{\dd}{\dd u}-(2\lambda+n-1-nu^2)\bigr]\Geg{n}{\lambda}
=-\frac{(2\lambda+n-1)(2\lambda+n-2)}{2(\lambda-1)}\Geg{n}{\lambda-1},
\end{split}
\label{eq:geg-order}
\end{equation}
valid for $\lambda>0$ and $\lambda>1$ respectively, and conjugated as in
\ref{app:conjugation}, that is, for the raising move by
$\sin^{\lambda+1}\chi\,(\cdot)\,\sin^{-\lambda}\chi$ and for the lowering move by
$\sin^{\lambda-1}\chi\,(\cdot)\,\sin^{-\lambda}\chi$, with $\lambda=\ell+1$.
\end{proof}

At $\lambda=1$ the prefactor in the second relation of \eqref{eq:geg-order} has a
removable singularity, and after clearing the denominator both sides vanish, because
$\Geg{n}{0}\equiv0$ for $n\ge1$ in the normalization \eqref{eq:gegenbauer-leading}.
Thus, there is no nontrivial order lowering move out of $\ell=0$, a boundary effect
which will appear again below.

\begin{proposition}[In-level moves, $\Delta K=0$]
\label{inlevel:prop:dressed}
Let $\wt{\Lambda}^+_\ell:=-\tfrac{\dd}{\dd\chi}+(\ell+1)\cot\chi$ for $\ell\ge0$ and
$\wt{\Lambda}^-_\ell:=\tfrac{\dd}{\dd\chi}+\ell\cot\chi$ for $\ell\ge1$. Then
\begin{equation}
\begin{split}
&\wt{\Lambda}^+_\ell f_{K\ell}=2(\ell+1)\,f_{K(\ell+1)}\ \ (n\ge0),
\\
&\wt{\Lambda}^-_\ell f_{K\ell}=\frac{(n+1)(n+2\ell+1)}{2\ell}\,f_{K(\ell-1)}\ \ (\ell\ge1),
\end{split}
\label{eq:inlevel-f}
\end{equation}
For $n=0$, hence $K=\ell$, the operator $\wt{\Lambda}^+_\ell$ annihilates
$f_{\ell\ell}=\sin^{\ell+1}\!\chi$. The first relation remains valid, but because the
target $f_{\ell(\ell+1)}$ is zero by \eqref{eq:fKl-boundary}, and not because the
coefficient $2(\ell+1)$ vanishes.
\end{proposition}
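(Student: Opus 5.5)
The plan is to verify both relations in \eqref{eq:inlevel-f} directly in the variable $u=\cos\chi$. I would use only the derivative formula $\tfrac{\dd}{\dd u}\Geg{n}{\lambda}=2\lambda\,\Geg{n-1}{\lambda+1}$ (with $\Geg{-1}{\lambda}\equiv0$) together with one classical contiguous relation. Throughout I write $\lambda=\ell+1$ and $f_{K\ell}=\sin^{\lambda}\chi\,\Geg{n}{\lambda}(\cos\chi)$, and I note that $\tfrac{\dd}{\dd\chi}=-\sin\chi\,\tfrac{\dd}{\dd u}$.

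\emph{Raising relation.} Differentiating gives
\[
\tfrac{\dd}{\dd\chi}f_{K\ell}=(\ell+1)\sin^{\ell}\chi\cos\chi\,\Geg{n}{\lambda}-\sin^{\ell+2}\chi\,\bigl(\Geg{n}{\lambda}\bigr)'(\cos\chi).
\]
In $\wt{\Lambda}^+_\ell f_{K\ell}$ the term $(\ell+1)\cot\chi\, f_{K\ell}$ cancels the first summand exactly. What remains is
\[
\sin^{\ell+2}\chi\cdot 2(\ell+1)\,\Geg{n-1}{\ell+2}(\cos\chi)=2(\ell+1)\,f_{K(\ell+1)},
\]
since moving to $(K,\ell+1)$ changes the degree to $n-1$ and the order to $\ell+2$. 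At $n=0$ the derivative vanishes, which reproduces $\Geg{-1}{\ell+2}\equiv0$ and hence the boundary convention \eqref{eq:fKl-boundary}. So this half is immediate.

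\emph{Lowering relation.} The same computation with $+\ell\cot\chi$ gives
\[
\wt{\Lambda}^-_\ell f_{K\ell}=\sin^{\ell}\chi\,\Bigl[(2\ell+1)\,u\,\Geg{n}{\lambda}-(1-u^2)\bigl(\Geg{n}{\lambda}\bigr)'\Bigr]_{u=\cos\chi}.
\]
The target is $f_{K(\ell-1)}=\sin^{\ell}\chi\,\Geg{n+1}{\lambda-1}(\cos\chi)$. The claim therefore reduces to the polynomial identity
\[
-(1-u^2)\tfrac{\dd}{\dd u}\Geg{n}{\lambda}+(2\lambda-1)u\,\Geg{n}{\lambda}=\frac{(n+1)(n+2\lambda-1)}{2(\lambda-1)}\,\Geg{n+1}{\lambda-1},\qquad \lambda>1,
\]
which is where the hypothesis $\ell\ge1$ enters. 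This is a classical Szeg\H{o}-type relation and could simply be cited.

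To keep the argument self-contained, I would call the left-hand side $q$ and argue in two steps.
\begin{enumerate}
\item I would show that $q$ solves the Gegenbauer equation \eqref{eq:gegenbauer-ode} with order $\lambda-1$ and degree $n+1$. The cleanest route is to verify the intertwining relation $D_{\lambda-1}\circ L=L\circ D_\lambda$, where $L=-(1-u^2)\tfrac{\dd}{\dd u}+(2\lambda-1)u$ and $D_\mu$ denotes the Gegenbauer operator. One checks that the eigenvalues match, since $(n+1)(n+2\lambda-1)=n(n+2\lambda)+2\lambda-1$ and this shift by a constant is exactly what the commutator produces. Then $q$ is a polynomial of degree at most $n+1$ in that eigenspace, so it is a multiple of $\Geg{n+1}{\lambda-1}$.
\item I would fix the constant by comparing leading coefficients using \eqref{eq:gegenbauer-leading}. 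The coefficient of $u^{n+1}$ in $q$ is $(n+2\lambda-1)\,2^n(\lambda)_n/n!$. On the right-hand side, $\tfrac{(n+1)(n+2\lambda-1)}{2(\lambda-1)}\cdot 2^{n+1}(\lambda-1)_{n+1}/(n+1)!$ gives the same value, because $(\lambda-1)_{n+1}=(\lambda-1)(\lambda)_n$.
\end{enumerate}

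The main obstacle I expect is the intertwining computation in step 1: a second-order commutator calculation where sign and constant slips are easy. As a fallback I would derive the identity by composing known relations. One option is to apply the order-raising relation of \eqref{eq:geg-order} at order $\lambda-1$ to $\Geg{n+1}{\lambda-1}$ and combine it with the derivative formula $\tfrac{\dd}{\dd u}\Geg{n+1}{\lambda-1}=2(\lambda-1)\Geg{n}{\lambda}$ and the Gegenbauer equation to eliminate the second derivative. The other is to compute $\wt{\Lambda}^+_{\ell-1}\wt{\Lambda}^-_\ell$ on $f_{K\ell}$ and compare it with the Gegenbauer equation.
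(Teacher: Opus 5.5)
Your proposal is correct. The raising half is the paper's argument: the derivative formula plus conjugation by powers of $\sin\chi$. For the lowering half you reduce to the same polynomial identity the paper uses, which it packages as the weighted derivative formula \eqref{eq:inlevel-deriv-comp}; the paper's $q$ is minus yours. The two proofs differ only in how that identity is proved.

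\emph{The paper's route.} It differentiates $q$ once. After substituting \eqref{eq:gegenbauer-ode}, the terms in $u\,\tfrac{\dd}{\dd u}\Geg{n}{\lambda}$ cancel, and in your sign convention $\tfrac{\dd}{\dd u}q=(n+1)(n+2\lambda-1)\Geg{n}{\lambda}$. By the derivative formula at $(n+1,\lambda-1)$, this is also the derivative of the right-hand side. The constant of integration is then removed by evaluating both sides at $u=1$.

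\emph{Your route.} You prove an intertwining relation instead. With the bare operators $D_\mu=(1-u^2)\tfrac{\dd^2}{\dd u^2}-(2\mu+1)u\tfrac{\dd}{\dd u}$ it reads $D_{\lambda-1}L=L\bigl(D_\lambda-(2\lambda-1)\bigr)$. The literal form $D_{\lambda-1}L=LD_\lambda$ holds only if the eigenvalue terms $(n+1)(n+2\lambda-1)$ and $n(n+2\lambda)$ are built into the two operators. You then need the polynomial eigenspace to be one-dimensional. This holds because the values $k(k+2\lambda-2)$, $0\le k\le n+1$, are distinct when $\lambda>1$; you should state this explicitly. Your leading-coefficient comparison is right.

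\emph{What each buys.} Your route costs a third-order computation. The paper only needs $\tfrac{\dd}{\dd u}\bigl(L\,\Geg{n}{\lambda}\bigr)$, which is the first line of your own intertwining calculation, so you could stop there and integrate. In return, your route gives structure. It is the same intertwiner-plus-uniqueness-plus-leading-coefficient scheme the paper uses for the Romanovski ladders in Lemma~\ref{inlevel:lem:intertwine} and Theorems~\ref{inlevel:thm:lower} and~\ref{inlevel:thm:raise}. It therefore explains why $L$ must map $\Geg{n}{\lambda}$ to a multiple of $\Geg{n+1}{\lambda-1}$, rather than merely verifying that it does.
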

\begin{proof}
The underlying Gegenbauer relations are $\tfrac{\dd}{\dd
u}\Geg{n}{\lambda}=2\lambda\,\Geg{n-1}{\lambda+1}$, which is obtained by differentiating
the generating function, and the formula (valid for $\lambda>1$):
\begin{equation}
\tfrac{\dd}{\dd u}\bigl[(1-u^2)^{\lambda-\frac12}\Geg{n}{\lambda}(u)\bigr]
=-\frac{(n+1)(n+2\lambda-1)}{2(\lambda-1)}(1-u^2)^{\lambda-\frac32}\Geg{n+1}{\lambda-1}(u)\,.
\label{eq:inlevel-deriv-comp}
\end{equation}
For \eqref{eq:inlevel-deriv-comp}, let us put
$q:=(1-u^2)\Geg{n}{\lambda}{}'-(2\lambda-1)u\Geg{n}{\lambda}$ and use
\eqref{eq:gegenbauer-ode} in the form
$(1-u^2)\Geg{n}{\lambda}{}''=(2\lambda+1)u\Geg{n}{\lambda}{}'-n(n+2\lambda)\Geg{n}{\lambda}$.
The terms in $u\Geg{n}{\lambda}{}'$ cancel, and
$q'=-(n+1)(n+2\lambda-1)\Geg{n}{\lambda}$, which coincides with the derivative of the
right-hand side by the first relation applied at $(n+1,\lambda-1)$. Evaluating both
sides at $u=1$ with $\Geg{m}{\mu}(1)=\binom{m+2\mu-1}{m}$ shows that the constant of
integration vanishes. Both relations preserve $n+\lambda$, so with $\lambda=\ell+1$ they
are the two in-level moves $(n,\ell)\mapsto(n\mp1,\ell\pm1)$, and the conjugations of
\ref{app:conjugation} give \eqref{eq:inlevel-f}.
\end{proof}

\begin{remark}
\label{inlevel:rem:composition}
Composing the two in-level moves one gets back a scalar,
$\wt{\Lambda}^-_{\ell+1}\wt{\Lambda}^+_\ell f_{K\ell}=n(n+2\ell+2)f_{K\ell}$, with
eigenvalue $n(n+2\ell+2)=(K+1)^2-(\ell+1)^2$.
\end{remark}

We will refer to the boundary row $n=0$, equivalently $\ell=K$, as the \emph{circular row} or \emph{circular sector}. This terminology is borrowed from the physical interpretation of the labels: for fixed principal label $K$, the condition $n=0$ selects the state with maximal angular label $\ell=K$, which is the analogue of a circular state in a central
force problem. In the present paper the expression is used primarily as a convenient name for this boundary of the lattice. Algebraically, the corresponding Romanovski polynomial has degree zero, $ \Rom{0}{\alpha_K}{\beta_K}\equiv1$, so the polynomial part is trivial and all dependence on $K$ is carried by the dressing. Thus, for the undeformed dressed functions, we have $ f_{KK}(\chi)=\sin^{K+1}\chi$, while for the degree dependent two-parameter family, $F_{KK}(\chi) = e^{-\alpha_K\chi/2}\sin^{K+1}\chi$.

The \emph{diagonal} operators are the horizontal ones restricted to the bottom row
$n=0$, on which $f_{KK}=\sin^{K+1}\chi$, because $\Geg{0}{\lambda}\equiv1$. We write
$\wt{O}^{\,\pm}_{K}:=\wt{H}^{\,KK}_{\pm}$, so that
\begin{equation}
\begin{split}
&\wt{O}^{\,+}_{K}f_{KK}=2(K+1)f_{(K+1)(K+1)}\quad(K\ge0),
\\
&\wt{O}^{\,-}_{K}f_{KK}=-(2K+1)f_{(K-1)(K-1)}\quad(K\ge1)\,.
\end{split}
\label{eq:Odiag}
\end{equation}
Note that the polynomial content of the circular row is trivial. What the diagonal
operators construct, starting from the ground function $f_{00}=\sin\chi$, is the family of
level dressings $\sin^{K+1}\chi$.

\section{Recurrence relations at fixed parameters}
\label{sec:recurrences}

Now we leave the trigonometric picture and work directly with the polynomials. The
relations of this section hold at fixed $(\alpha,\beta)$ and they are, as we recalled in
the Introduction, the specialization to \eqref{eq:RomanovskiDE} of the classical
structure relations of the hypergeometric type families \cite[Ch.~1]{Nikiforov},
\cite{Koepf,Ismail}. Our aim is to write them out with all the constants, and this is
what makes them usable in the degree dependent setting, because, as they do not move
$(\alpha,\beta)$, they act inside a level multiplet. Let us write
\begin{equation}
\begin{split}
&\wt L_{\mu}:=(1+x^2)\frac{\dd^2}{\dd x^2}+(2\beta x+\alpha)\frac{\dd}{\dd x}-\mu,
\\
&\varrho_m:=m(2\beta+m-1),\quad
\wt T_{a,b}:=(1+x^2)\frac{\dd}{\dd x}+a\,x+b ,
\end{split}
\label{eq:inlevel-L}
\end{equation}
so that \eqref{eq:RomanovskiDE} reads $\wt L_{\varrho_n}\Rom{n}{\alpha}{\beta}=0$.

\begin{lemma}
\label{inlevel:lem:intertwine}
For any constants $a,b,\mu$ and any smooth $f$,
\begin{align}
\wt L_{\mu}\bigl[\wt T_{a,b}f\bigr]-\wt T_{a+2,b}\bigl[\wt L_{\varrho_n}f\bigr]
&=\bigl[2(a+1-\beta)-(\mu-\varrho_n)\bigr](1+x^2)f' \nonumber\\
&\quad+\Bigl\{\bigl[a(2\beta-\mu)+(a+2)\varrho_n\bigr]x
+a\alpha-b(\mu-\varrho_n)\Bigr\}f .
\label{eq:inlevel-residual}
\end{align}
In particular, $\wt L_{\mu}\wt T_{a,b}=\wt T_{a+2,b}\wt L_{\varrho_n}$ as differential
operators if and only if
\begin{equation}
\begin{split}
&\mathrm{(C1)}\ \mu=\varrho_n+2(a+1-\beta),
\\
&\mathrm{(C2)}\ \varrho_n=a(a+1-2\beta),
\\
&\mathrm{(C3)}\ a\alpha=2b(a+1-\beta).
\end{split}
\label{eq:inlevel-C}
\end{equation}
Here condition \textup{(C2)} is the quadratic $a^2+(1-2\beta)a-n(2\beta+n-1)=0$, with
roots
\begin{equation}
a=-n \qquad\text{and}\qquad a=2\beta+n-1 ,
\label{eq:inlevel-roots}
\end{equation}
which by \textup{(C1)} give $\mu=\varrho_{n-1}$ and $\mu=\varrho_{n+1}$ respectively;
and, when $a+1-\beta\neq0$, \textup{(C3)} determines $b=\frac{a\alpha}{2(a+1-\beta)}$.
\end{lemma}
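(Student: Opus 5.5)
The proof is a direct symbolic computation, organized by the order of the derivatives of $f$. Write $\sigma=1+x^2$ and $\tau=2\beta x+\alpha$, so that $\wt L_\mu=\sigma\partial^2+\tau\partial-\mu$ and $\wt T_{a,b}=\sigma\partial+(ax+b)$. I will expand both compositions $\wt L_\mu\wt T_{a,b}f$ and $\wt T_{a+2,b}\wt L_{\varrho_n}f$ as linear combinations of $f''',f'',f',f$ with polynomial coefficients, using only $\sigma'=2x$, $\sigma''=2$, $\tau'=2\beta$. I then subtract the two expansions coefficient by coefficient.

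\emph{First expansion.} Put $g:=\wt T_{a,b}f=\sigma f'+(ax+b)f$. Then
\[
g'=\sigma f''+\bigl[(a+2)x+b\bigr]f'+af,
\qquad
g''=\sigma f'''+\bigl[(a+4)x+b\bigr]f''+2(a+1)f'.
\]
Hence
\[
\wt L_\mu g=\sigma^2f'''+\sigma\bigl[(a+4)x+b+\tau\bigr]f''+\bigl\{2(a+1)\sigma+\tau\bigl[(a+2)x+b\bigr]-\mu\sigma\bigr\}f'+\bigl[a\tau-\mu(ax+b)\bigr]f.
\]

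\emph{Second expansion.} Let $h:=\wt L_{\varrho_n}f$. Then $\sigma h'$ contributes $\sigma^2f'''+\sigma(2x+\tau)f''+\sigma(2\beta-\varrho_n)f'$. The remaining term $((a+2)x+b)h$ contributes $((a+2)x+b)\bigl[\sigma f''+\tau f'-\varrho_nf\bigr]$.

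\emph{Subtraction.} The $f'''$ terms cancel. The $f''$ coefficient is $\sigma\bigl[(a+4)x+b+\tau-2x-\tau-(a+2)x-b\bigr]=0$; this is exactly why the second operator carries the shift $a+2$. In the $f'$ coefficient the terms $\tau((a+2)x+b)$ cancel, leaving $\bigl[2(a+1)-2\beta-\mu+\varrho_n\bigr]\sigma$. The $f$ coefficient is
\[
a(2\beta x+\alpha)-\mu(ax+b)+\varrho_n\bigl((a+2)x+b\bigr),
\]
which regroups into the displayed $x$-part and constant part. This gives \eqref{eq:inlevel-residual}. The only real hazard is bookkeeping in the $f'$ and $f''$ terms, so I will recheck those two coefficients carefully.

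\emph{Characterization.} The composite identity holds for all $f$ if and only if the polynomial coefficients of $f'$ and $f$ in \eqref{eq:inlevel-residual} vanish identically. Testing on $f=1$ and $f=x$ suffices to see this.
\begin{itemize}[label={}]
\item The $f'$ coefficient vanishes exactly when (C1) holds.
\item For the $x$-coefficient of $f$, substitute (C1): $a(2\beta-\mu)=a(4\beta-2a-2)-a\varrho_n$. The condition becomes $2\varrho_n-2a(a+1-2\beta)=0$, which is (C2).
\item The constant coefficient, after using (C1) to write $\mu-\varrho_n=2(a+1-\beta)$, becomes (C3).
\end{itemize}
Conversely, (C1)--(C3) make all three coefficients vanish.

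\emph{Roots and eigenvalues.} Expanding (C2) gives $a^2+(1-2\beta)a-n(2\beta+n-1)=0$. The root $a=-n$ is checked directly, and the other root is $a=2\beta+n-1$ by Vieta (their sum is $2\beta-1$). Substituting into (C1):
\begin{itemize}[label={}]
\item $a=-n$ gives $\mu=\varrho_n+2(1-n-\beta)=(n-1)(2\beta+n-2)=\varrho_{n-1}$;
\item $a=2\beta+n-1$ gives $\mu=\varrho_n+2(\beta+n)=(n+1)(2\beta+n)=\varrho_{n+1}$.
\end{itemize}
Finally, solving (C3) for $b$ when $a+1-\beta\neq0$ gives $b=\frac{a\alpha}{2(a+1-\beta)}$.
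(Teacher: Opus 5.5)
Your proposal is correct and follows the paper's proof essentially step by step. Like the paper, you expand both compositions via $g=\wt T_{a,b}f$ and $h=\wt L_{\varrho_n}f$, note that the $f'''$, $f''$ and $\tau\,[(a+2)x+b]f'$ terms cancel, and then read (C1), (C2) (after substituting (C1)) and (C3) off the residual, with the roots and $\mu=\varrho_{n\mp1}$ following from Vieta. Your explicit justification of the ``if and only if'' by testing on $f=1$ and $f=x$ is a small addition that the paper leaves implicit.
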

\begin{proof}
With $g:=\wt T_{a,b}f$ one has $g'=(1+x^2)f''+[(a+2)x+b]f'+af$ and
$g''=(1+x^2)f'''+[(a+4)x+b]f''+(2a+2)f'$, while with $h:=\wt L_{\varrho_n}f$ one has
$h'=(1+x^2)f'''+[(2+2\beta)x+\alpha]f''+(2\beta-\varrho_n)f'$. Expanding $\wt L_\mu g$
and $\wt T_{a+2,b}h$ and subtracting, the $f'''$ terms agree, the $f''$ terms agree
because $(a+4)x+b+2\beta x+\alpha=(2+2\beta)x+\alpha+(a+2)x+b$, and the products
$(2\beta x+\alpha)[(a+2)x+b]f'$ cancel, so what is left are the stated $f'$ and $f$
coefficients. The vanishing of the $f'$ coefficient is (C1). Substituting (C1) into the
$x-$part of the $f$ coefficient gives $a(4\beta-2a-2)+2\varrho_n=0$, that is, (C2), and
the constant part is (C3). Now, the roots \eqref{eq:inlevel-roots} have sum $2\beta-1$
and product $-n(2\beta+n-1)$, and (C1) gives then
$\mu=(n-1)(2\beta+n-2)=\varrho_{n-1}$ and $\mu=(n+1)(2\beta+n)=\varrho_{n+1}$.
\end{proof}

\begin{theorem}
\label{inlevel:thm:lower}
Let $n\ge1$, and assume \eqref{eq:nondeg} at the degrees $n-1$ and $n$. Then, in the
monic normalization,
\begin{equation}\label{eq:inlevel-low}
\Bigl[(1+x^2)\frac{\dd}{\dd x}-n\,x+r_{n,n-1}\Bigr]\Rom{n}{\alpha}{\beta}(x)
=\gamma^{-}_{n}\;\Rom{n-1}{\alpha}{\beta}(x)\,,
\end{equation}
where $r_{n,n-1}=\dfrac{n\alpha}{2(\beta+n-1)}$ is the subleading coefficient
\eqref{eq:subleading-coeff}, and
\begin{equation}
\gamma^{-}_{n}=\frac{n\,(2\beta+n-2)}{2\beta+2n-3}
\biggl[\,1+\frac{\alpha^{2}}{4(\beta+n-1)^{2}}\,\biggr]\,.
\label{eq:inlevel-low-const}
\end{equation}
\end{theorem}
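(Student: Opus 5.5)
Let $a=-n$ and $b=r_{n,n-1}$, so that the operator in \eqref{eq:inlevel-low} is $\wt T_{-n,\,r_{n,n-1}}$. The plan is to use Lemma~\ref{inlevel:lem:intertwine} to show that $g:=\wt T_{-n,b}\Rom{n}{\alpha}{\beta}$ solves \eqref{eq:RomanovskiDE} at degree $n-1$. Uniqueness (Remark~\ref{rem:normalization}) then makes $g$ a multiple of $\Rom{n-1}{\alpha}{\beta}$, and the multiple is read off from a leading coefficient.

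\textbf{Step 1: the intertwining conditions.} The root $a=-n$ of (C2) satisfies (C2), and (C1) gives $\mu=\varrho_{n-1}$. Since $a+1-\beta=-(\beta+n-1)$, condition (C3) gives
\[
b=\frac{-n\alpha}{-2(\beta+n-1)}=\frac{n\alpha}{2(\beta+n-1)}=r_{n,n-1}.
\]
Here $\beta+n-1\neq0$, which follows from \eqref{eq:nondeg} at degree $n$ with $k=n-1$. Hence
\[
\wt L_{\varrho_{n-1}}\,\wt T_{-n,b}=\wt T_{2-n,b}\,\wt L_{\varrho_n},
\]
and applying both sides to $\Rom{n}{\alpha}{\beta}$ gives $\wt L_{\varrho_{n-1}}g=0$.

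\textbf{Step 2: the degree of $g$.} The operator $\wt T_{-n,b}$ maps $x^n$ to $(1+x^2)nx^{n-1}-nx^{n+1}+bx^n$. The $x^{n+1}$ terms cancel, so $g$ is a polynomial of degree at most $n$. Its $x^n$ coefficient is
\[
(n-1)r_{n,n-1}-n\,r_{n,n-1}+b=-r_{n,n-1}+b=0.
\]
This uses the contributions $(n-1)r_{n,n-1}$ from $(1+x^2)\frac{\dd}{\dd x}$ acting on $x^{n-1}$, $-n\,r_{n,n-1}$ from $-nx$ acting on $x^{n-1}$, and $b$ from $b$ acting on $x^n$. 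So $\deg g\le n-1$.

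\textbf{Step 3: identify $g$ up to a constant.} By \eqref{eq:nondeg} at degree $n-1$, the polynomial solutions of $\wt L_{\varrho_{n-1}}y=0$ of degree at most $n-1$ are multiples of $\Rom{n-1}{\alpha}{\beta}$. A lower-degree solution $m<n-1$ would need $\varrho_m=\varrho_{n-1}$, i.e. $(n-1-m)(2\beta+n+m-2)=0$. The second factor is the case $k=m$ of \eqref{eq:nondeg} at degree $n-1$, so it is nonzero. Consequently $g=\gamma_n^-\Rom{n-1}{\alpha}{\beta}$, where $\gamma_n^-$ is the $x^{n-1}$ coefficient of $g$.

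\textbf{Step 4: compute $\gamma_n^-$ (the main obstacle).} The contributions to the $x^{n-1}$ coefficient are:
\begin{itemize}
\item from $r_{n,n}x^n$ under $(1+x^2)\frac{\dd}{\dd x}$: $n$;
\item from $r_{n,n-2}x^{n-2}$: $(n-2)r_{n,n-2}-n\,r_{n,n-2}=-2r_{n,n-2}$;
\item from $b\,x^{n-1}$: $b\,r_{n,n-1}=r_{n,n-1}^2$.
\end{itemize}
Hence
\[
\gamma_n^-=n-2r_{n,n-2}+r_{n,n-1}^2.
\]
Substituting \eqref{eq:subleading-coeff} and writing $B=\beta+n-1$, $D=2\beta+2n-3=2B-1$:
\[
\gamma_n^-=n-\frac{n(n-1)(2B+\alpha^2)}{2BD}+\frac{n^2\alpha^2}{4B^2}.
\]

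For the $\alpha$-free part,
\[
n\Bigl(1-\frac{n-1}{D}\Bigr)=\frac{n(2\beta+n-2)}{D}.
\]
For the $\alpha^2$ part,
\[
n\alpha^2\Bigl(\frac{n}{4B^2}-\frac{n-1}{2BD}\Bigr)
=\frac{n\alpha^2\,[nD-2B(n-1)]}{4B^2D}.
\]
Since $D=2B-1$, we get $nD-2B(n-1)=2B-n=2\beta+n-2$. The $\alpha^2$ part is therefore $\frac{n(2\beta+n-2)}{D}\cdot\frac{\alpha^2}{4B^2}$, and together these give \eqref{eq:inlevel-low-const}.

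If $D=0$, the formula is read via Remark~\ref{rem:cancellation}. The identity is polynomial in the data produced by \eqref{eq:CoeffRecursionSolved}, so continuation covers that case.
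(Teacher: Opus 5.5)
Your proposal is correct and follows essentially the same route as the paper: the intertwining Lemma with $a=-n$, $b=r_{n,n-1}$, $\mu=\varrho_{n-1}$, the cancellation of the $x^{n+1}$ and $x^{n}$ coefficients, uniqueness at degree $n-1$, and $\gamma^-_n=n-2r_{n,n-2}+r_{n,n-1}^2$ reduced via \eqref{eq:subleading-coeff}. Your explicit check that no lower degree solution exists is a small elaboration of the paper's appeal to uniqueness. Your treatment of $2\beta+2n-3=0$ is consistent with the paper's. Under the hypotheses that case can only occur at $n=1$, where the paper likewise invokes Remark~\ref{rem:cancellation} and the convention $r_{1,-1}=0$.
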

\begin{proof}
Let us set $\wt M_{n}:=\wt T_{-n,\,r_{n,n-1}}$ and $P:=\wt M_n\Rom{n}{\alpha}{\beta}$.
The coefficient of $x^{n+1}$ in $P$ is $n-n=0$, and the one of $x^{n}$ is
$(n-1)r_{n,n-1}-nr_{n,n-1}+r_{n,n-1}=0$, this last cancellation identifying $r_{n,n-1}$
uniquely. Now, Lemma~\ref{inlevel:lem:intertwine} with $a=-n$, $b=r_{n,n-1}$ and
$\mu=\varrho_{n-1}$ (a pair which verifies (C3), because $a+1-\beta=1-n-\beta\neq0$ by
\eqref{eq:nondeg} at degree $n$ with $k=n-1$, and
$\frac{a\alpha}{2(a+1-\beta)}=\frac{n\alpha}{2(\beta+n-1)}=r_{n,n-1}$) gives us $\wt
L_{\varrho_{n-1}}P=\wt T_{2-n,\,r_{n,n-1}}\wt L_{\varrho_n}\Rom{n}{\alpha}{\beta}=0$.
Hence $P$ is a polynomial solution of $\wt L_{\varrho_{n-1}}y=0$ of degree $\le n-1$, so
that $P=\gamma\Rom{n-1}{\alpha}{\beta}$ by uniqueness. Finally, $\gamma$ is the
coefficient of $x^{n-1}$ in $P$, that is, $\gamma=n-2r_{n,n-2}+r_{n,n-1}^{2}$. Inserting
\eqref{eq:subleading-coeff} and reducing over the common denominator
$4(\beta+n-1)^{2}(2\beta+2n-3)$, the $\alpha-$independent part of the numerator is
$4n(\beta+n-1)^{2}(2\beta+n-2)$, while the coefficient of $\alpha^{2}$ is
$n(2\beta+n-2)$, and this gives \eqref{eq:inlevel-low-const}.
\end{proof}

At $n=1$ the quotient $(2\beta+n-2)/(2\beta+2n-3)=(2\beta-1)/(2\beta-1)$ must be 
understood as
prescribed in Remark~\ref{rem:cancellation}, where it equals $1$, so that
$\gamma^{-}_{1}=1+\alpha^{2}/(4\beta^{2})$. The same value comes directly from the
computation of coefficients in the proof, which at $n=1$ reads
$1-2r_{1,-1}+r_{1,0}^{2}=1+r_{1,0}^{2}$, by the convention $r_{n,j}=0$ for $j<0$.
Moreover, inside the affine family of operators $(1+x^2)\frac{\dd}{\dd x}-n\,x+b$,
$b\in\R$, the value $b=r_{n,n-1}$ is the only one for which the image of
$\Rom{n}{\alpha}{\beta}$ has degree $\le n-1$. 

\begin{theorem}
\label{inlevel:thm:raise}
Let $n\ge0$, assume \eqref{eq:nondeg} at the degrees $n$ and $n+1$, and assume
$2\beta+2n-1\neq0$. Then
\begin{equation}
\begin{split}
&\Bigl[-(1+x^2)\frac{\dd}{\dd x}-(2\beta+n-1)\Bigl(x+\frac{\alpha}{2(\beta+n)}\Bigr)\Bigr]
\Rom{n}{\alpha}{\beta}(x)=\gamma^{+}_{n}\;\Rom{n+1}{\alpha}{\beta}(x)\,,
\\ 
&\gamma^{+}_{n}:=-(2\beta+2n-1)\,,
\end{split}
\label{eq:inlevel-raise}
\end{equation}
the proportionality constant being thus independent of $\alpha$.
\end{theorem}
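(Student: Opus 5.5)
The argument follows the lowering theorem, with the other root of (C2). Take $a=2\beta+n-1$ in Lemma~\ref{inlevel:lem:intertwine}, so that (C1) gives $\mu=\varrho_{n+1}$. Then $a+1-\beta=\beta+n$, which is nonzero by \eqref{eq:nondeg} at degree $n+1$ with $k=n$. Indeed, $2\beta+(n+1)+n-1=2(\beta+n)\neq0$. Hence (C3) fixes
\[
b=\frac{(2\beta+n-1)\alpha}{2(\beta+n)} .
\]
Put $\wt N_n:=\wt T_{2\beta+n-1,\,b}$ and observe that the operator in \eqref{eq:inlevel-raise} is exactly $-\wt N_n$. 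The lemma gives
\[
\wt L_{\varrho_{n+1}}\bigl(\wt N_n\Rom{n}{\alpha}{\beta}\bigr)=\wt T_{a+2,b}\,\wt L_{\varrho_n}\Rom{n}{\alpha}{\beta}=0 ,
\]
so $Q:=-\wt N_n\Rom{n}{\alpha}{\beta}$ is a polynomial solution of $\wt L_{\varrho_{n+1}}y=0$.

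Next, bound the degree of $Q$ and find its top coefficient. The operator $(1+x^2)\frac{\dd}{\dd x}+ax$ raises the degree by at most one. On $x^n$ its coefficient of $x^{n+1}$ is $n+a=2\beta+2n-1$, so $Q$ has $x^{n+1}$-coefficient $-(2\beta+2n-1)$. This is nonzero by the extra hypothesis $2\beta+2n-1\neq0$, so $\deg Q=n+1$.

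By the nondegeneracy at degree $n+1$ and the uniqueness in Remark~\ref{rem:normalization}, every polynomial solution of $\wt L_{\varrho_{n+1}}y=0$ of degree $\le n+1$ is a scalar multiple of $\Rom{n+1}{\alpha}{\beta}$. Therefore
\[
Q=-(2\beta+2n-1)\,\Rom{n+1}{\alpha}{\beta},
\]
which is \eqref{eq:inlevel-raise} with $\gamma^+_n=-(2\beta+2n-1)$. The constant visibly does not involve $\alpha$, since only the top term of the operator contributes to it.

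The step that needs the most care is the uniqueness claim. The space of polynomial solutions of degree $\le n+1$ must be one-dimensional, and the descending recurrence \eqref{eq:CoeffRecursionSolved} gives exactly this. For the operator $\wt L_{\varrho_{n+1}}$, the recurrence coefficient of $r_k$ is $(k-n-1)(k+n+2\beta)$. This vanishes at $k=n+1$ only, provided $k+n+2\beta\neq0$ for $k\le n$, which is \eqref{eq:nondeg} at degree $n+1$. So a solution vanishing at top degree vanishes identically.

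Finally, if $n=0$ or the unsimplified $b$ has a removable singularity, Remark~\ref{rem:cancellation} covers it. Here $b$ is genuinely finite because $\beta+n\neq0$.
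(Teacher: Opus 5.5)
Your proposal is correct and is essentially the paper's proof: you take the second root $a=2\beta+n-1$ of (C2) in Lemma~\ref{inlevel:lem:intertwine}, fix $b$ by (C3) using $\beta+n\neq0$, read off the $x^{n+1}$ coefficient $-(2\beta+2n-1)$, and conclude by uniqueness at degree $n+1$. Your explicit check that the polynomial solutions of $\wt L_{\varrho_{n+1}}y=0$ of degree $\le n+1$ form a one-dimensional space, via the descending recurrence, simply spells out the step the paper cites as ``by uniqueness''.
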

\begin{proof}
Set $\wt N_{n}:=-\wt T_{\,2\beta+n-1,\;(2\beta+n-1)\alpha/(2(\beta+n))}$ and
$Q:=\wt N_n\Rom{n}{\alpha}{\beta}$. Here $a+1-\beta=\beta+n\neq0$ by \eqref{eq:nondeg}
at degree $n+1$ with $k=n$, and (C3) holds by construction, so
Lemma~\ref{inlevel:lem:intertwine} with the second root of \eqref{eq:inlevel-roots}
gives $\wt L_{\varrho_{n+1}}Q=0$. The coefficient of $x^{n+1}$ in $Q$ equals
$-n-(2\beta+n-1)=-(2\beta+2n-1)$, which is nonzero by hypothesis. Thus $Q$ has degree
exactly $n+1$ and, by uniqueness,
$Q=-(2\beta+2n-1)\Rom{n+1}{\alpha}{\beta}$.
\end{proof}
\begin{remark}
The hypothesis
$2\beta+2n-1\neq0$ (which is automatic for $n\ge1$, being
\eqref{eq:nondeg} at degree $n+1$ with $k=n-1$) can not be dropped at $n=0$. 
Indeed, for $\beta=\tfrac12$ both sides
of \eqref{eq:inlevel-raise} vanish identically, so the relation degenerates to $0=0$ and
it does not produce $\Rom{1}{\alpha}{\beta}$.
\end{remark}

Adding \eqref{eq:inlevel-low} and \eqref{eq:inlevel-raise} eliminates the derivative and
gives the three-term recurrence announced in
Proposition~\ref{prop:3TR-statement}.

\begin{corollary}\label{cor:3TR}
Let $n\ge1$ and let the hypotheses of Theorems~\textup{\ref{inlevel:thm:lower}} and
\textup{\ref{inlevel:thm:raise}} hold. Then \eqref{eq:3TR} holds with
\begin{equation}
\begin{split}
&b_n=\frac{1}{2\beta+2n-1}\Bigl[r_{n,n-1}-\frac{(2\beta+n-1)\alpha}{2(\beta+n)}\Bigr]
=\frac{\alpha(1-\beta)}{2(\beta+n-1)(\beta+n)},
\\
&c_n=\frac{\gamma^{-}_{n}}{\gamma^{+}_{n}},
\end{split}
\label{eq:3TR-from-ladders}
\end{equation}
which are the expressions \eqref{eq:3TR-coeffs}. For $n=0$, if $\beta\neq0$ then
\eqref{eq:3TR} holds with $b_0=-\alpha/(2\beta)$, $c_0=0$ and
$\Rom{-1}{\alpha}{\beta}=0$.
\end{corollary}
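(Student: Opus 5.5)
Add \eqref{eq:inlevel-low} and \eqref{eq:inlevel-raise}. On the left, the derivative terms $\pm(1+x^2)\frac{\dd}{\dd x}$ cancel. The multiplication terms combine to
\[
\bigl[-n-(2\beta+n-1)\bigr]x+r_{n,n-1}-\frac{(2\beta+n-1)\alpha}{2(\beta+n)}=-(2\beta+2n-1)x+\Bigl[r_{n,n-1}-\frac{(2\beta+n-1)\alpha}{2(\beta+n)}\Bigr].
\]
The right-hand side is $\gamma^{-}_n\Rom{n-1}{\alpha}{\beta}-(2\beta+2n-1)\Rom{n+1}{\alpha}{\beta}$. By hypothesis $2\beta+2n-1\neq0$, so we may divide by $-(2\beta+2n-1)=\gamma^+_n$. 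Solving for $x\Rom{n}{\alpha}{\beta}$ then gives \eqref{eq:3TR} with $b_n$ equal to the bracket divided by $2\beta+2n-1$. For $c_n$ the division gives $c_n=-\gamma^-_n/(2\beta+2n-1)=\gamma^-_n/\gamma^+_n$, matching \eqref{eq:3TR-from-ladders}.

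The remaining work is algebraic simplification. For $b_n$, substitute $r_{n,n-1}=n\alpha/(2(\beta+n-1))$ and put both terms over the common denominator $2(\beta+n-1)(\beta+n)$. The numerator becomes
\[
\alpha\bigl[n(\beta+n)-(2\beta+n-1)(\beta+n-1)\bigr].
\]
Expanding, this is $\alpha\bigl[-2\beta^2-4\beta n+3\beta+\dots\bigr]$. The claim is that it factors as $\alpha(1-\beta)(2\beta+2n-1)$. I will check this by expansion: $(1-\beta)(2\beta+2n-1)=2\beta+2n-1-2\beta^2-2\beta n+\beta$, and the bracket is $n\beta+n^2-(2\beta^2+2\beta n-2\beta+n\beta+n^2-n-\beta-n+1)$, which agrees. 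Cancelling the factor $2\beta+2n-1$ gives the stated $b_n$.

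For $c_n$, insert \eqref{eq:inlevel-low-const} into $\gamma^-_n/\gamma^+_n$. Rewrite $1+\alpha^2/(4(\beta+n-1)^2)$ as $[\alpha^2+4(\beta+n-1)^2]/(4(\beta+n-1)^2)$. The result is exactly the $c_n$ of \eqref{eq:3TR-coeffs}, including the minus sign coming from $\gamma^+_n$.

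For $n=0$ there is no lowering relation. Use \eqref{eq:inlevel-raise} alone with $\Rom{0}{\alpha}{\beta}=1$, which gives $-(2\beta-1)(x+\alpha/(2\beta))=-(2\beta-1)\Rom{1}{\alpha}{\beta}$. This presupposes $\beta\neq1/2$; alternatively, and more simply, note directly that $\Rom{1}{\alpha}{\beta}=x+r_{1,0}=x+\alpha/(2\beta)$ from \eqref{eq:subleading-coeff} when $\beta\neq0$. Hence $x=\Rom{1}{\alpha}{\beta}-\alpha/(2\beta)$, that is, $b_0=-\alpha/(2\beta)$, with $c_0=0$ and $\Rom{-1}{\alpha}{\beta}=0$.

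The only real obstacle is the factorization for $b_n$, which is a routine polynomial identity. Apart from that, one must only keep track of the hypotheses: nondegeneracy at degrees $n-1$, $n$, $n+1$ is what licenses using both theorems at once.
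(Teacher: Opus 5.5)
Your proposal is correct and follows the paper's proof essentially line by line. You add the lowering and raising relations so that the derivative terms cancel, divide by $\gamma^+_n=-(2\beta+2n-1)$, factor the numerator of the bracket as $(1-\beta)(2\beta+2n-1)$, and settle $n=0$ directly from $\Rom{1}{\alpha}{\beta}(x)=x+\alpha/(2\beta)$. One harmless slip: your first expansion of the numerator should read $-2\beta^2-2\beta n+3\beta+2n-1$ rather than containing $-4\beta n$, but the explicit check you carry out right afterwards is correct, so nothing is affected.
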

\begin{proof}
Let $n\ge1$. The sum of the left-hand sides of \eqref{eq:inlevel-low} and
\eqref{eq:inlevel-raise} is
$\bigl[-(2\beta+2n-1)x+r_{n,n-1}-\frac{(2\beta+n-1)\alpha}{2(\beta+n)}\bigr]
\Rom{n}{\alpha}{\beta}$, and the sum of the right-hand sides is
$\gamma^-_n\Rom{n-1}{\alpha}{\beta}+\gamma^+_n\Rom{n+1}{\alpha}{\beta}$. Dividing by
$\gamma^+_n=-(2\beta+2n-1)$, which is nonzero by \eqref{eq:nondeg} at degree $n+1$ with
$k=n-1$, we get \eqref{eq:3TR}. As for $b_n$, the bracket equals
$\frac{\alpha}{2}\cdot\frac{n(\beta+n)-(2\beta+n-1)(\beta+n-1)}{(\beta+n-1)(\beta+n)}$ and
the numerator factors as $(1-\beta)(2\beta+2n-1)$.

The case $n=0$ lies outside the scope of Theorem~\ref{inlevel:thm:lower}, and the two
expressions in \eqref{eq:3TR-from-ladders} involve the symbols $r_{0,-1}$ and
$\gamma^-_0$, which are not defined. So, it must be verified directly. By
\eqref{eq:CoeffRecursionSolved}, $\Rom{0}{\alpha}{\beta}=1$ and
$\Rom{1}{\alpha}{\beta}(x)=x+\frac{\alpha}{2\beta}$, so that
$x\,\Rom{0}{\alpha}{\beta}=x=\Rom{1}{\alpha}{\beta}(x)-\frac{\alpha}{2\beta}$, and this
is \eqref{eq:3TR} with $b_0=-\alpha/(2\beta)$ and $c_0=0$.
\end{proof}

Specializing to the degree dependent parameters, $\beta=\beta_K=-K$ and
$\alpha=\alpha_K$, and writing everything in terms of $\ell=K-n$, one has
$\beta+n-1=-(\ell+1)$, $\beta+n=-\ell$, $2\beta+n-1=-(n+2\ell+1)$,
$2\beta+n-2=-(n+2\ell+2)$, $2\beta+2n-1=-(2\ell+1)$ and $2\beta+2n-3=-(2\ell+3)$, and we
arrive at the following result.

\begin{corollary}
\label{inlevel:cor:physical}
Let $K\ge1$, $\ell=K-n$, $\beta_K=-K$ and $\alpha_K\in\R$ arbitrary. Then,
for $1\le n\le K$,
\begin{equation}
\Bigl[(1+x^2)\frac{\dd}{\dd x}-n\,x-\frac{n\,\alpha_K}{2(\ell+1)}\Bigr]
\Rom{n}{\alpha_K}{\beta_K}
=\frac{n(n+2\ell+2)}{2\ell+3}\biggl[1+\frac{\alpha_K^{2}}{4(\ell+1)^{2}}\biggr]
\Rom{n-1}{\alpha_K}{\beta_K}\,, 
\label{eq:inlevel-low-phys}
\end{equation}
and, for $0\le n\le K-1$,
\begin{equation}
\Bigl[-(1+x^2)\frac{\dd}{\dd x}+(n+2\ell+1)\Bigl(x-\frac{\alpha_K}{2\ell}\Bigr)\Bigr]
\Rom{n}{\alpha_K}{\beta_K}
=(2\ell+1)\,\Rom{n+1}{\alpha_K}{\beta_K}\,.
\label{eq:inlevel-raise-phys}
\end{equation}
\end{corollary}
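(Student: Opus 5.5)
The plan is to obtain Corollary~\ref{inlevel:cor:physical} by specializing Theorems~\ref{inlevel:thm:lower} and~\ref{inlevel:thm:raise} at $\beta=\beta_K=-K$, $\alpha=\alpha_K$. There are three steps: check the hypotheses in the stated ranges, rewrite every constant through the list of substitutions displayed just before the statement, and make sure no removable singularity has to be invoked.

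\emph{Hypotheses.} Remark~\ref{rem:normalization} shows that $\beta=-K$ satisfies \eqref{eq:nondeg} at every degree $0\le m\le K$.
\begin{itemize}
\item For the lowering relation, $1\le n\le K$ gives nondegeneracy at the degrees $n-1$ and $n$, which is what Theorem~\ref{inlevel:thm:lower} requires.
\item For the raising relation, $0\le n\le K-1$ gives nondegeneracy at $n$ and $n+1\le K$.
\item The extra hypothesis of Theorem~\ref{inlevel:thm:raise} also holds: $2\beta+2n-1=-(2\ell+1)\neq0$, since $\ell\ge1$ there.
\item This is exactly why the raising range stops at $n=K-1$: at $n=K$ the polynomial $\Rom{K+1}{\alpha_K}{-K}$ is degenerate, and $\beta+n=-\ell$ would vanish.
\end{itemize}

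\emph{Lowering constants.} In \eqref{eq:inlevel-low} the multiplicative term becomes $r_{n,n-1}=n\alpha_K/(2(\beta+n-1))=-n\alpha_K/(2(\ell+1))$. This matches Corollary~\ref{thm:subleading} and gives the operator in \eqref{eq:inlevel-low-phys}. For the constant \eqref{eq:inlevel-low-const}, use $2\beta+n-2=-(n+2\ell+2)$, $2\beta+2n-3=-(2\ell+3)$ and $(\beta+n-1)^2=(\ell+1)^2$. The two minus signs cancel, which yields $\frac{n(n+2\ell+2)}{2\ell+3}\bigl[1+\alpha_K^2/(4(\ell+1)^2)\bigr]$. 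No denominator vanishes, since $\ell\ge0$, so Remark~\ref{rem:cancellation} is not needed; in particular the case $n=1$ is covered without comment.

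\emph{Raising constants.} In \eqref{eq:inlevel-raise}, $-(2\beta+n-1)=n+2\ell+1$ and $\alpha/(2(\beta+n))=-\alpha_K/(2\ell)$. The operator therefore becomes $-(1+x^2)\frac{\dd}{\dd x}+(n+2\ell+1)\bigl(x-\alpha_K/(2\ell)\bigr)$, and $\gamma^+_n=-(2\beta+2n-1)=2\ell+1$.

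The only step that needs care is the sign bookkeeping in the raising operator, where $-(2\beta+n-1)$ multiplies $x+\alpha/(2(\beta+n))$ and both factors change sign. I would verify this step by checking the leading coefficient: $-(n+1)+(n+1)+\dots$ must give the monic coefficient $2\ell+1$. The check is $-n+(n+2\ell+1)=2\ell+1$. Everything else is direct substitution.
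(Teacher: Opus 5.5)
Your proposal is correct and is the paper's own argument: the corollary is obtained by substituting $\beta=-K$, $\alpha=\alpha_K$ into Theorems~\ref{inlevel:thm:lower} and~\ref{inlevel:thm:raise}. The substitutions are $\beta+n-1=-(\ell+1)$, $\beta+n=-\ell$, $2\beta+n-1=-(n+2\ell+1)$, $2\beta+n-2=-(n+2\ell+2)$, $2\beta+2n-1=-(2\ell+1)$ and $2\beta+2n-3=-(2\ell+3)$, and your operators and constants all check out, with your explicit verification of the hypotheses being a useful addition the paper leaves implicit. Two small points: $2\ell+1\neq0$ holds for every $\ell\ge0$, and the real role of $\ell\ge1$ is $\beta+n=-\ell\neq0$; also, the stray expression $-(n+1)+(n+1)+\dots$ should be dropped in favour of your correct check $-n+(n+2\ell+1)=2\ell+1$.
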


Both sides of each relation carry the same pair of parameters $(\alpha_K,\beta_K)$, so
inside every level multiplet these are first order differential recurrences in the
degree alone. The additive constant (the potential term, in physical terminology) in \eqref{eq:inlevel-low-phys} is
$-n\alpha_K/(2(\ell+1))=c^K_{n,n-1}$, which is the subleading connection coefficient of
Corollary~\ref{thm:subleading}. This is not a coincidence, because by
Theorem~\ref{inlevel:thm:lower} that constant equals $r_{n,n-1}$, and $r_{n,n-1}$ is
precisely what Corollary~\ref{thm:subleading} computes. The restriction $\ell\ge1$ in
\eqref{eq:inlevel-raise-phys} says that no further in-level raising is possible from the
top corner $n=K$ of a level, as it should be, because $\Rom{K+1}{\alpha_K}{-K}$ would
exceed the multiplet.

The two relations above preserve $(\alpha_K,\beta_K)$, and therefore they act inside a
level multiplet. It is worth recording where they stand with respect to the tables of
\cite[\S9.9]{KLS}. Under the dictionary \eqref{eq:KLS-dictionary} those tables list the
recurrence \cite[(9.9.4)]{KLS}, which is \eqref{eq:3TR}, together with a forward and a
backward shift operator, \cite[(9.9.6)]{KLS} and \cite[(9.9.7)]{KLS}. Both of them move
$N$, and hence $\beta$: the first one is the Appell type relation
$\frac{\dd}{\dd x}\Rom{n}{\alpha}{\beta}=n\,\Rom{n-1}{\alpha}{\beta+1}$, while the second
one reads, in the present variables,
\begin{equation}
(1+x^{2})\frac{\dd}{\dd x}\Rom{n}{\alpha}{\beta}
+\bigl[\alpha+2(\beta-1)x\bigr]\Rom{n}{\alpha}{\beta}
=(2\beta+n-2)\,\Rom{n+1}{\alpha}{\beta-1}\,.
\label{eq:KLS-backward}
\end{equation}
Neither of them preserves the pair of parameters, so neither of them acts inside a level
multiplet, and the pair \eqref{eq:inlevel-low}, \eqref{eq:inlevel-raise} --- which does
--- is not among them. It is obtained instead from the structure relation, as we recalled
at the beginning of this section.

\section{Trigonometric form, level chains, and an iterated raising formula}
\label{sec:trig}

Now we return to the trigonometric variable and restrict ourselves to the specific
form $\alpha_K=c/(K+1)$. Let us define, for $K\ge0$ and $0\le\ell\le K$ (with
$n=K-\ell$),
\begin{equation}
F_{K\ell}(\chi):=e^{-\alpha_K\chi/2}\,\sin^{K+1}\!\chi\;\Rom{n}{\alpha_K}{\beta_K}(\cot\chi),
\qquad \chi\in(0,\pi),
\label{eq:inlevel-F}
\end{equation}
with the same boundary convention as in \eqref{eq:fKl-boundary}, namely
$F_{K\ell}:\equiv0$ whenever $K-\ell<0$.
These are a natural deformation of the dressed functions \eqref{eq:fKl}. 
Indeed, multiplying \eqref{eq:master} by $\sin^{\ell+1}\chi$ gives
$\sin^{K+1}\chi\,\Rom{n}{0}{\beta_K}(\cot\chi)=c(n,\ell)f_{K\ell}(\chi)$, so at
$\alpha_K=0$ one has $F_{K\ell}=c(n,\ell)f_{K\ell}$, while for $\alpha_K\neq0$ the extra
factor $e^{-\alpha_K\chi/2}$ still depends only on the level, and thus it is common to
all the members of one multiplet. The essential feature here is that the whole dressing
$e^{-\alpha_K\chi/2}\sin^{K+1}\chi$ is a function of $K$ alone.

\begin{proposition}
\label{inlevel:prop:chi}
For $\ell\ge0$ and $\ell\ge1$ respectively, let us define the $\ell-$indexed and
\emph{$K-$independent} operators
\begin{equation}
\wt A^{\,-}_{\ell}:=\frac{\dd}{\dd\chi}-(\ell+1)\cot\chi+\frac{c}{2(\ell+1)}\,,
\quad
\wt A^{\,+}_{\ell}:=-\frac{\dd}{\dd\chi}-\ell\cot\chi+\frac{c}{2\ell}\,.
\label{eq:inlevel-A}
\end{equation}
Then
\begin{align}
\wt A^{\,-}_{\ell}F_{K\ell}
&=-\frac{n(n+2\ell+2)}{2\ell+3}\biggl[1+\frac{\alpha_K^{2}}{4(\ell+1)^{2}}\biggr]
F_{K(\ell+1)},
&&0\le\ell\le K,
\label{eq:inlevel-A-act}\\
\wt A^{\,+}_{\ell}F_{K\ell}&=-(2\ell+1)\,F_{K(\ell-1)},
&&1\le\ell\le K .
\label{eq:inlevel-A-act2}
\end{align}
\end{proposition}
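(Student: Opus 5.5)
The plan is to transfer Corollary~\ref{inlevel:cor:physical} to the variable $\chi$ by a direct conjugation. Put $x=\cot\chi$, so that $\dd x/\dd\chi=-\csc^2\chi=-(1+x^2)$. Then
\[
(1+x^2)\frac{\dd}{\dd x}=-\frac{\dd}{\dd\chi}
\]
on functions of $\chi$. Write $D_K:=e^{-\alpha_K\chi/2}\sin^{K+1}\chi$ for the level dressing, which depends only on $K$. For any operator $\wt B$ acting on the polynomial part, its dressed version is $D_K\,\wt B\,D_K^{-1}$. Using
\[
D_K^{-1}\frac{\dd}{\dd\chi}D_K=\frac{\dd}{\dd\chi}+(K+1)\cot\chi-\frac{\alpha_K}{2},
\]
the term $-(1+x^2)\frac{\dd}{\dd x}=\frac{\dd}{\dd\chi}$ becomes, after conjugation,
\[
\frac{\dd}{\dd\chi}-(K+1)\cot\chi+\frac{\alpha_K}{2}.
\]

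\textbf{Raising relation \eqref{eq:inlevel-raise-phys}.} Write its operator as
\[
\frac{\dd}{\dd\chi}+(n+2\ell+1)\cot\chi-\frac{(n+2\ell+1)\alpha_K}{2\ell}.
\]
Conjugating by $D_K$ with $K=n+\ell$, the $\cot\chi$ coefficient becomes $(n+2\ell+1)-(K+1)=\ell$. The constant becomes
\[
\frac{\alpha_K}{2}\Bigl[1-\frac{n+2\ell+1}{\ell}\Bigr]=-\frac{\alpha_K(K+1)}{2\ell}=-\frac{c}{2\ell}.
\]
This is exactly where $\alpha_K=c/(K+1)$ removes the $K$-dependence. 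The dressed operator is therefore $-\wt A^{\,+}_\ell$. Since both sides of the relation share the dressing $D_K$ (same level), we get
\[
-\wt A^{+}_\ell F_{K\ell}=(2\ell+1)F_{K(\ell-1)},
\]
which is \eqref{eq:inlevel-A-act2}. Here $n+1=K-(\ell-1)$, and the range $0\le n\le K-1$ corresponds to $1\le\ell\le K$.

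\textbf{Lowering relation \eqref{eq:inlevel-low-phys}.} Its operator is
\[
-\frac{\dd}{\dd\chi}-n\cot\chi-\frac{n\alpha_K}{2(\ell+1)}.
\]
Conjugation turns it into
\[
-\frac{\dd}{\dd\chi}+(K+1-n)\cot\chi-\frac{\alpha_K}{2}-\frac{n\alpha_K}{2(\ell+1)}.
\]
The $\cot\chi$ coefficient is $K+1-n=\ell+1$, and the constant is
\[
-\frac{\alpha_K(K+1)}{2(\ell+1)}=-\frac{c}{2(\ell+1)}.
\]
So the dressed operator is $-\wt A^{\,-}_\ell$, and the stated identity follows with the minus sign. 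This covers $1\le n\le K$, i.e.\ $0\le\ell\le K-1$.

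\textbf{Boundary case.} It remains to check $\ell=K$, i.e.\ $n=0$. There $F_{K(K+1)}\equiv0$ by convention and the prefactor also carries a factor $n=0$, so we must verify $\wt A^-_K F_{KK}=0$ directly. With $F_{KK}=e^{-\alpha_K\chi/2}\sin^{K+1}\chi$,
\[
\frac{\dd}{\dd\chi}F_{KK}=\Bigl[-\frac{\alpha_K}{2}+(K+1)\cot\chi\Bigr]F_{KK},
\qquad \frac{\alpha_K}{2}=\frac{c}{2(K+1)},
\]
so $\wt A^-_K$ annihilates $F_{KK}$ as required.

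The main obstacle is bookkeeping of signs and of the exact sign convention for $\dd/\dd\chi$ under $x=\cot\chi$, together with confirming that the constants collapse to $c/(2\ell)$ and $c/(2(\ell+1))$. Everything else is inherited from Corollary~\ref{inlevel:cor:physical}.
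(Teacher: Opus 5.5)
Your proposal is correct and follows essentially the paper's route: rewrite the in-level relations of Corollary~\ref{inlevel:cor:physical} in the variable $\chi$ using $(1+x^2)\frac{\dd}{\dd x}=-\frac{\dd}{\dd\chi}$, absorb the level dressing $e^{-\alpha_K\chi/2}\sin^{K+1}\chi$, and let $\alpha_K=c/(K+1)$ collapse the constants to $c/(2\ell)$ and $c/(2(\ell+1))$. The paper applies $\wt A^{\,\pm}_\ell$ to $F_{K\ell}$ by the product rule instead of conjugating the polynomial operators, which is the same computation, and your direct check that $\wt A^{\,-}_K$ annihilates $F_{KK}$ covers the $n=0$ case that the paper treats in the remark after the proposition.
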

\begin{proof}
Let us write $F_{K\ell}=e^{-\alpha_K\chi/2}\sin^{K+1}\chi\,R(\cot\chi)$ with
$R=\Rom{n}{\alpha_K}{\beta_K}$, and use the identity
$\frac{\dd}{\dd\chi}[R(\cot\chi)]=-(1+x^2)R'|_{x=\cot\chi}$ together with
$\frac{\dd}{\dd\chi}[e^{-\alpha_K\chi/2}\sin^{K+1}\chi]
=[-\tfrac{\alpha_K}{2}+(K+1)\cot\chi]e^{-\alpha_K\chi/2}\sin^{K+1}\chi$. For
$\wt A^{\,-}_\ell$ the resulting bracket is
$-(1+x^2)R'+nxR+\bigl(\tfrac{c}{2(\ell+1)}-\tfrac{\alpha_K}{2}\bigr)R$, and since
$c=(K+1)\alpha_K$,
$\tfrac{c}{2(\ell+1)}-\tfrac{\alpha_K}{2}=\tfrac{n\alpha_K}{2(\ell+1)}$, so that the
bracket is minus the left-hand side of \eqref{eq:inlevel-low-phys}. In the same way, for
$\wt A^{\,+}_\ell$, with $(K+1)+\ell=n+2\ell+1$ and
$\tfrac{\alpha_K}{2}+\tfrac{c}{2\ell}=\tfrac{(n+2\ell+1)\alpha_K}{2\ell}$, the bracket
is minus the left-hand side of \eqref{eq:inlevel-raise-phys}. Now, the targets
$F_{K(\ell\pm1)}$ carry the same level, hence the same dressing, and both
\eqref{eq:inlevel-A-act} and \eqref{eq:inlevel-A-act2} follow.
\end{proof}
\begin{remark}
In \eqref{eq:inlevel-A-act} the case $n=0$, that is, $\ell=K$, involves three separate
facts, each one of which would by itself make the identity trivial. The scalar
coefficient contains the factor $n$, so it vanishes; the target $F_{K(K+1)}$ vanishes by
the boundary convention; and $\wt A^{\,-}_{K}$ annihilates $F_{KK}$. Thus, the identity
holds at the boundary of the lattice (see Figure~\ref{fig:ladder}).
\end{remark}

The $K-$independence of \eqref{eq:inlevel-A} is the direct consequence of the choice
$\alpha_K=c/(K+1)$; in this case we obtain a single pair of operators, indexed by the angular label alone, that traverses every level multiplet in the lattice of dressed
functions. Note also that $\wt A^{\,-}_{\ell}$ is the
derivative operator conjugated by the seed of the $\ell-$th column. Indeed, with
$F_{\ell\ell}=e^{-c\chi/(2(\ell+1))}\sin^{\ell+1}\chi$,
\[
F_{\ell\ell}\circ\frac{\dd}{\dd\chi}\circ F_{\ell\ell}^{-1}
=\frac{\dd}{\dd\chi}-\frac{F_{\ell\ell}'}{F_{\ell\ell}}
=\frac{\dd}{\dd\chi}-(\ell+1)\cot\chi+\frac{c}{2(\ell+1)}
=\wt A^{\,-}_{\ell}\,.
\]

Since $\Rom{0}{\alpha}{\beta}\equiv1$ for every pair of parameters, the diagonal step in
such a chain only updates the dressing, and all the polynomial information is produced
by the in-level raisings. Iterating them we get an explicit generation formula.

\begin{theorem}
\label{inlevel:thm:product}
Let $\beta\in\R$ satisfy
\begin{equation}
2\beta+r-1\neq0\qquad\text{for }r=0,1,\dots,2n-1,
\label{eq:inlevel-product-hyp}
\end{equation}
and set $\wt N_{j}:=-(1+x^2)\frac{\dd}{\dd x}-(2\beta+j-1)\bigl(x+\frac{\alpha}{2(\beta+j)}\bigr)$.
Then, in the monic normalization,
\begin{equation}
\Rom{n}{\alpha}{\beta}(x)=\Biggl[\prod_{j=0}^{n-1}\bigl(-(2\beta+2j-1)\bigr)\Biggr]^{-1}
\wt N_{n-1}\,\wt N_{n-2}\cdots\wt N_{1}\,\wt N_{0}\;\mathbf{1}\,,
\label{eq:inlevel-product-general}
\end{equation}
the empty product for $n=0$ being $1$. For $\beta=\beta_K=-K$ and $\alpha=\alpha_K$ one
has $-(2\beta+j-1)=2K-j+1$ and $x+\frac{\alpha}{2(\beta+j)}=x-\frac{\alpha_K}{2(K-j)}$,
in such a way that every member of the level multiplet is generated from $\mathbf 1$ by
at most $K$ first order operations.
\end{theorem}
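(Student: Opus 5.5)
Induction on $n$, applying Theorem~\ref{inlevel:thm:raise} once per step. For $n=0$ the empty product is $1$, and $\Rom{0}{\alpha}{\beta}=\mathbf 1$ by \eqref{eq:CoeffRecursionSolved}, so the formula holds trivially. For the inductive step, assume \eqref{eq:inlevel-product-general} at degree $j$ (with $j<n$). Applying $\wt N_j$ and using \eqref{eq:inlevel-raise} gives
\[
\wt N_j\Rom{j}{\alpha}{\beta}=-(2\beta+2j-1)\Rom{j+1}{\alpha}{\beta}.
\]
Dividing by the nonzero factor $-(2\beta+2j-1)$ then yields the formula at degree $j+1$. The operator $\wt N_j$ is exactly the bracket in \eqref{eq:inlevel-raise} with $n$ replaced by $j$, so no further identification is needed.

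The main work is checking that the hypotheses of Theorem~\ref{inlevel:thm:raise} hold at each step $j=0,\dots,n-1$ under \eqref{eq:inlevel-product-hyp}. Three conditions are needed.
\begin{enumerate}[label=(\roman*)]
\item Nondegeneracy \eqref{eq:nondeg} at degree $j$ and at degree $j+1$. At degree $m\le n$ the factors are $2\beta+m+k-1$ with $0\le k<m$. Since $m+k$ ranges over $\{m,\dots,2m-1\}\subset\{0,\dots,2n-1\}$, these factors are nonzero by hypothesis.
\item The condition $2\beta+2j-1\ne0$. Here $r=2j\le 2n-2$, so this is covered.
\item The denominator $\beta+j$ inside $\wt N_j$, and $2(\beta+j)=2\beta+(2j+1)-1$. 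With $r=2j+1\le 2n-1$, this is also covered.
\end{enumerate}
The range $r\le 2n-1$ is therefore exactly what is needed, and every division in the construction is legitimate.

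For the level specialization, substitute $\beta=-K$. This gives $-(2\beta+j-1)=2K-j+1$ and $\alpha/(2(\beta+j))=-\alpha_K/(2(K-j))$. For $n\le K$ the hypothesis \eqref{eq:inlevel-product-hyp} becomes $-2K+r-1\ne0$ for $r\le 2n-1\le 2K-1$. This holds because $-2K+r-1\le-2$, the same computation as in Remark~\ref{rem:normalization}. Every member of $\mathcal M_K$ then arises from $\mathbf 1$ by $n\le K$ applications of first order operators. These are the in-level raisings \eqref{eq:inlevel-raise-phys}, with $\ell=K-j\ge1$ throughout.
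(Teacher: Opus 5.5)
Your proposal is correct and follows the paper's proof: an induction on $n$ whose inductive step is Theorem~\ref{inlevel:thm:raise} at degree $j=n-1$, with proportionality constant $-(2\beta+2j-1)$. Your check of which values of $r$ in \eqref{eq:inlevel-product-hyp} give which hypothesis ($r=2j$, $r=2j+1$, and the ranges $m\le r\le 2m-1$) is the same accounting the paper gives right after the proof. Note that your check (iii) is already contained in (i), as \eqref{eq:nondeg} at degree $j+1$ with $k=j$.
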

\begin{proof}
It is an induction on $n$, the inductive step being Theorem~\ref{inlevel:thm:raise} at
degree $j=n-1$, whose proportionality constant is $-(2\beta+2n-3)$.
\end{proof}

The hypothesis \eqref{eq:inlevel-product-hyp} guarantees the regularity of all the denominators and normalizing constants entering the construction. 
Its odd values $r=2j+1$ guarantee
$\beta+j\neq0$, so that the additive constants of the $\wt N_{j}$ are well defined; 
its even
values $r=2j$ guarantee that no normalizing factor vanishes; and together they contain
the nondegeneracy conditions \eqref{eq:nondeg} at every intermediate degree $m\le n$.
Altogether, \eqref{eq:inlevel-product-hyp} excludes only the finite set
$\bigl\{\tfrac12,0,-\tfrac12,-1,\dots,-(n-1)\bigr\}$ of values of $\beta$. In the level
case the first failure occurs at $r=2K+1$, that is, $\beta+K=0$, which enters the range
$0\le r\le2n-1$ precisely when $n=K+1$. Thus, the construction stops exactly at $n=K$,
and this explains the finite size of the multiplet. In the trigonometric picture the
same iteration reads, for $0\le\ell\le K$,
\[
\begin{split}
F_{K\ell}=& \Biggl[\prod_{\ell'=\ell+1}^{K}\bigl(-(2\ell'+1)\bigr)\Biggr]^{-1}
\wt A^{\,+}_{\ell+1}\,\wt A^{\,+}_{\ell+2}\cdots\wt A^{\,+}_{K}\;F_{KK},
\\
F_{KK}=& e^{-\alpha_K\chi/2}\sin^{K+1}\!\chi\,.
\end{split}
\]

\begin{remark}
\label{rem:rodrigues-comparison}
Both \eqref{eq:inlevel-product-general} and the Rodrigues representation
\eqref{eq:RomanovskiRodrigues} are $n-$step differential constructions of the same
polynomial, but they are of a different nature. The Rodrigues formula differentiates
$w^{\alpha,\beta}(x)(1+x^{2})^{n}$ once, to order $n$, and its intermediate derivatives
are not members of the family. On the other hand, \eqref{eq:inlevel-product-general} is
an ordered composition of $n$ first order operators, each one of which maps the monic
family into itself, in such a way that every intermediate object $\wt N_{j-1}\cdots\wt
N_{0}\mathbf 1$ is again a multiple of some $\Rom{j}{\alpha}{\beta}$, the monic
normalization is preserved by the explicit normalizing product, and the construction
specializes without any change to the degree dependent parameters.
\end{remark}

\section{$\mathrm{su}(1,1)$ structures on the lattice of dressed functions}
\label{sec:visualization}

Figure~\ref{fig:ladder} shows the lattice of dressed functions in the $(\ell,n)$ plane,
together with the four families of moves defined in Section~\ref{sec:ladders}. 
Our purpose now is to examine which of them integrate to a lowest weight 
$\mathrm{su}(1,1)$ module.

\begin{figure}[htbp]
\centering
\begin{tikzpicture}[scale=0.70, every node/.style={font=\small}]
  \pgfmathsetmacro{\hs}{1.6}
  \pgfmathsetmacro{\vs}{1.05}
  \foreach \Kc in {1,...,6}{
    \pgfmathtruncatemacro{\lmax}{min(\Kc,5)}
    \pgfmathtruncatemacro{\lmin}{max(0,\Kc-4)}
    \draw[gray, dashed]
      ($(\lmin*\hs,{(\Kc-\lmin)*\vs})$) -- ($(\lmax*\hs,{(\Kc-\lmax)*\vs})$);
  }
  \foreach \l in {0,...,5}{
    \foreach \nn in {0,...,4}{
      \node[circle, fill=black!12, draw=black!45, inner sep=1.7pt,
            minimum size=5pt] (p\l\nn) at (\l*\hs, \nn*\vs) {};
    }
  }
  \foreach \l in {0,...,5}{
    \node[circle, fill=orange!70, draw=orange!90, inner sep=1.7pt,
          minimum size=5pt] at (\l*\hs, 0) {};
  }
  \foreach \l in {0,1,2}{
    \foreach \nn in {0,1,2,3}{
      \pgfmathtruncatemacro{\np}{\nn+1}
      \draw[blue!80,->,>=stealth,thick] ($(\l*\hs,\nn*\vs)+(0,0.16)$) -- ($(\l*\hs,\np*\vs)-(0,0.16)$);
    }
  }
  \foreach \nn in {0,1,2}{
    \foreach \l in {0,1,2,3}{
      \pgfmathtruncatemacro{\lp}{\l+1}
      \draw[purple!80,->,>=stealth,thick] ($(\l*\hs,\nn*\vs)+(0.16,0)$) -- ($(\lp*\hs,\nn*\vs)-(0.16,0)$);
    }
  }
  \foreach \l in {0,...,5}{\node[below=8pt,font=\footnotesize] at (\l*\hs,0) {$\ell=\l$};}
  \foreach \nn in {0,...,4}{\node[left=8pt,font=\footnotesize] at (0,\nn*\vs) {$n=\nn$};}
  \begin{scope}[shift={(-1.2,-1.45)}]
    \draw[blue!80,->,>=stealth,thick] (0,-0.5) -- (0.9,-0.5)
      node[right,font=\footnotesize,black] {$\wt{J}^{\,K}_{\pm}$: $\Delta n=\pm1$, $\Delta\ell=0$};
    \draw[purple!80,->,>=stealth,thick] (0,-1.25) -- (0.9,-1.25)
      node[right,font=\footnotesize,black] {$\wt{H}^{\,K\ell}_{\pm}$: $\Delta\ell=\pm1$, $\Delta n=0$};
    \node[circle, fill=orange!70, draw=orange!90, inner sep=1.7pt, minimum size=5pt] at (0.15,-2) {};
    \node[right=2pt,font=\footnotesize,black] at (0.3,-2) {circular row $n=0$: $\wt{O}^{\,\pm}_{K}=\wt{H}^{\,KK}_{\pm}$};
    \draw[gray,dashed] (0,-2.75) -- (0.9,-2.75)
      node[right,font=\footnotesize,black] {levels $K=n+\ell$: in-level $\wt{\Lambda}^\pm_\ell$};
  \end{scope}
\end{tikzpicture}
\caption{The lattice of dressed functions $f_{K\ell}$ in the $(\ell,n)$ plane. Each
vertical column (fixed $\ell$) carries an irreducible $\mathrm{su}(1,1)$ lowest weight
module of Bargmann index $k=\ell+1$ (Theorem~\ref{thm:su11-vertical}). The circular
row $n=0$ can be organized, after a boundary prescription at $K=0$ and a $K-$dependent
rescaling, as the module with $k=1$ (Proposition~\ref{thm:su11}). The dashed
anti-diagonals are the levels.}
\label{fig:ladder}
\end{figure}

Before going on, let us fix the sense in which label dependent families of operators are
composed. A family $\{A_{\lambda}\}$, where $A_{\lambda}$ is applied to the object
carrying the label $\lambda$, determines a global linear operator $A$ on the linear span
of those objects through $A\,v_\lambda:=A_\lambda v_\lambda$, extended by linearity, and
the products and commutators are those of the global operators. In other words, the
global operators are differential operators whose constant coefficients are functions of
the number operator. We write the global operators without tilde, reserving the tilded
symbols for their differential representatives. There is a caveat which we want to stress: a family of differential expressions defines a global operator on the
span only where its image stays inside that span; at a boundary label this may fail, and
there the action of the global operator has to be prescribed abstractly, instead of
being read off from the differential expression. This happens below, at the
bottom of the circular row.

\subsection{Vertical columns}

Let us fix $\ell\ge0$ and consider the column $\{f_{K\ell}\}_{n\ge0}$, $K=n+\ell$. Here
the first order operators already close the algebra, with no rescaling beyond the
normalization of the states. Indeed, by Proposition~\ref{thm:vertical} the products $\wt
J_-\wt J_+$ and $\wt J_+\wt J_-$ have eigenvalues $(n+1)(n+2\ell+2)$ and $n(n+2\ell+1)$,
which are independent of any normalization of the states.

\begin{theorem}
\label{thm:su11-vertical}
Put $\lvert n\rangle_{\ell}:=\bigl(M^{(\ell)}_{n}\bigr)^{-1}f_{K\ell}$ with
$\bigl(M^{(\ell)}_{n}\bigr)^{2}=\binom{n+2\ell+1}{n}$, and endow their span with the
inner product which makes $\{\lvert n\rangle_{\ell}\}$ orthonormal (this is an abstract
inner product, unitarily equivalent to $\ell^{2}(\N_0)$, and not the $L^2(0,\pi)$ one).
Define
$J^{(\ell)}_{\pm}\lvert n\rangle_{\ell}:=\wt{J}^{\,K}_{\pm}\lvert n\rangle_{\ell}$ and
$J^{(\ell)}_{0}\lvert n\rangle_{\ell}:=(n+\ell+1)\lvert n\rangle_{\ell}$. Then
\[
J^{(\ell)}_{+}\lvert n\rangle_{\ell}=\sqrt{(n+1)(n+2\ell+2)}\,\lvert n+1\rangle_{\ell},
\quad
J^{(\ell)}_{-}\lvert n\rangle_{\ell}=\sqrt{n(n+2\ell+1)}\,\lvert n-1\rangle_{\ell},
\]
these operators verify $[J^{(\ell)}_{0},J^{(\ell)}_{\pm}]=\pm J^{(\ell)}_{\pm}$ and
$[J^{(\ell)}_{-},J^{(\ell)}_{+}]=2J^{(\ell)}_{0}$, and the Casimir
$Q_{\ell}=(J^{(\ell)}_{0})^{2}-\ha(J^{(\ell)}_{+}J^{(\ell)}_{-}+J^{(\ell)}_{-}J^{(\ell)}_{+})$
acts as $\ell(\ell+1)$. Thus, each column is, on the algebraic span of the $\lvert
n\rangle_{\ell}$, an irreducible lowest weight $\mathfrak{su}(1,1)$ module with lowest
weight vector $\lvert0\rangle_{\ell}=\sin^{\ell+1}\!\chi$ and Bargmann index
$k=\ell+1$; that is, the underlying lowest weight module of the positive discrete series
representation $D^{+}_{\ell+1}$ in the classification of \cite{Bargmann} (see also
\cite[Ch.~6]{Vilenkin}).
\end{theorem}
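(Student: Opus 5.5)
The plan is to transport Proposition~\ref{thm:vertical} to the normalized basis, then check the commutators and the Casimir directly on basis vectors, and finally read off irreducibility and the lowest weight from the explicit matrix elements.

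\textbf{Step 1: normalized ladder coefficients.} Fix $\ell$ and write $M_n:=M^{(\ell)}_n$. By \eqref{eq:vertical}, $\wt J^{\,K}_+\lvert n\rangle_\ell=(n+1)\,(M_{n+1}/M_n)\,\lvert n+1\rangle_\ell$. Since
\[
\frac{M_{n+1}^2}{M_n^2}=\binom{n+2\ell+2}{n+1}\Big/\binom{n+2\ell+1}{n}=\frac{n+2\ell+2}{n+1},
\]
the coefficient is $(n+1)\sqrt{(n+2\ell+2)/(n+1)}=\sqrt{(n+1)(n+2\ell+2)}$. Similarly, $\wt J^{\,K}_-\lvert n\rangle_\ell=(n+2\ell+1)\,(M_{n-1}/M_n)\,\lvert n-1\rangle_\ell$, and the same ratio at $n-1$ gives $\sqrt{n(n+2\ell+1)}$. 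At $n=0$ the boundary convention \eqref{eq:fKl-boundary} gives $J_-\lvert0\rangle_\ell=0$, which is consistent with the factor $n$. The lowest weight vector is $\lvert0\rangle_\ell=f_{\ell\ell}=\sin^{\ell+1}\chi$, because $M_0=1$. Note that $J_\pm$ are globally defined on the span in the sense fixed above: each $\wt J^{\,K}_\pm$ maps $f_{K\ell}$ back into the span.

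\textbf{Step 2: commutation relations and Casimir.} These are checked on basis vectors. For the weight relations, $J_0$ is diagonal with eigenvalue $n+\ell+1$ and $J_\pm$ shift $n$ by $\pm1$, so $[J_0,J_\pm]=\pm J_\pm$. For the remaining relations, note that $J_-J_+$ and $J_+J_-$ are diagonal with eigenvalues $(n+1)(n+2\ell+2)$ and $n(n+2\ell+1)$. Their difference is $2n+2\ell+2=2(n+\ell+1)$, which gives $[J_-,J_+]=2J_0$. For the Casimir,
\[
(n+\ell+1)^2-\tfrac12\bigl[(n+1)(n+2\ell+2)+n(n+2\ell+1)\bigr]=(n+\ell+1)^2-(n^2+2n\ell+2n+\ell+1)=\ell^2+\ell.
\]

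\textbf{Step 3: unitarity, irreducibility and identification.} With respect to the abstract inner product, $J_+$ and $J_-$ are mutually adjoint, since their matrix elements $\langle n+1|J_+|n\rangle=\langle n|J_-|n+1\rangle$ are real and equal. Irreducibility follows because all the raising coefficients are nonzero. Any nonzero invariant subspace contains a vector with finitely many components; applying $J_-$ repeatedly reaches $\lvert0\rangle_\ell$, and applying $J_+$ then regenerates every basis vector. The spectrum of $J_0$ is $\{k+n\}_{n\ge0}$ with $k=\ell+1$, and the Casimir equals $k(k-1)$, which identifies the module with $D^+_{\ell+1}$.

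\textbf{Main obstacle.} The step needing the most care is Step~1. One must confirm that the $K$-dependence of $\wt J^{\,K}_\pm$ is consistent with the global-operator convention, and handle the $n=0$ boundary correctly; the computation itself is routine.
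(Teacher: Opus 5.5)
Your proof is correct and follows the paper's argument essentially step by step. You use the same ratio $(M_{n+1}/M_n)^2=(n+2\ell+2)/(n+1)$ for the normalized matrix elements, the same diagonal computations for $[J_-,J_+]$ and the Casimir, and the same identification $k(k-1)=\ell(\ell+1)$. The only variation is in the irreducibility step. The paper isolates a basis vector by a Lagrange interpolation polynomial in $J_0$, while you apply $J_-^{N}$ ($N$ the top index of a nonzero vector) to obtain a nonzero multiple of $\lvert0\rangle_\ell$. Your route is equally valid, since $J_-^{N}$ kills the lower components and the lowering coefficients $\sqrt{n(n+2\ell+1)}$ vanish only at $n=0$.
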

\begin{proof}
From \eqref{eq:vertical} and
$\bigl(M^{(\ell)}_{n+1}/M^{(\ell)}_{n}\bigr)^{2}=(n+2\ell+2)/(n+1)$ one gets
$J^{(\ell)}_{+}\lvert n\rangle_{\ell}
=(n+1)\sqrt{\tfrac{n+2\ell+2}{n+1}}\,\lvert n+1\rangle_{\ell}$, and in the same way for
$J^{(\ell)}_{-}$, the two matrix elements being mutually adjoint. The weight relations
are immediate from the $\pm1$ shift, and
$[J^{(\ell)}_{-},J^{(\ell)}_{+}]\lvert n\rangle_{\ell}
=[(n+1)(n+2\ell+2)-n(n+2\ell+1)]\lvert n\rangle_{\ell}=2(n+\ell+1)\lvert n\rangle_{\ell}$.
As for the Casimir, $(n+\ell+1)^{2}-\ha[n(n+2\ell+1)+(n+1)(n+2\ell+2)]=\ell(\ell+1)$.
Let us consider now the irreducibility on the algebraic span, that is, among the
subspaces of finitely supported combinations of the $\lvert n\rangle_{\ell}$ (nothing is
claimed about closed invariant subspaces of the completion, where the generators are
unbounded). If $V\neq\{0\}$ is invariant, a Lagrange interpolation polynomial in
$J^{(\ell)}_{0}$ (whose eigenvalues $n+\ell+1$ are pairwise distinct) isolates some
$\lvert n\rangle_{\ell}\in V$. Now, the lowering coefficient vanishes only at $n=0$ and
the raising coefficient never vanishes, so a repeated application produces the whole
column. Finally, since
$\ha(J_{+}J_{-}+J_{-}J_{+})=J_1^{2}+J_2^{2}$ for $J_{\pm}=J_1\pm\ii J_2$, our Casimir is
$Q_\ell=J_0^{2}-J_1^{2}-J_2^{2}$, whose eigenvalue on the module of Bargmann index $k$ is
$k(k-1)$; hence $\ell(\ell+1)=k(k-1)$ and $k=\ell+1$. (The opposite sign convention for
the Casimir, also in use, would give $k(1-k)$ instead.)
\end{proof}

\subsection{The circular row}

Along the row $n=0$ the situation is different in two respects. First, the differential
expression $\wt O^{\,-}_{K}$ does not annihilate the bottom state. Indeed, at $K=0$ one
has $\wt O^{\,-}_{0}=-\cos\chi\frac{\dd}{\dd\chi}-\sin\chi$ and
\begin{equation}
\wt O^{\,-}_{0}f_{00}=-\cos^{2}\chi-\sin^{2}\chi=-1,
\label{eq:O0-escape}
\end{equation}
which is the case $n=0$ of the general identity \eqref{eq:H-leaves-space} below, and it
lies outside the dressed space, because every $f_{K\ell}$ vanishes at $\chi=0$. Thus,
the lowest weight condition must be imposed in the abstract module, instead of being
inherited from the differential representative. Second, the raw operators
\eqref{eq:Odiag} do not close. A direct computation gives $[\wt O^{\,-},\wt
O^{\,+}]f_{KK}=(-8K-6)f_{KK}$, which is affine in $K$, but it has neither the slope nor
the sign of a diagonal $\mathrm{su}(1,1)$ generator. So, both a normalization of the
states and a rescaling of the operators are needed here.

\begin{proposition}
\label{thm:su11}
Put $\lvert K\rangle:=(K+1)^{-1/2}f_{KK}=(K+1)^{-1/2}\sin^{K+1}\!\chi$, endow the span
with the inner product which makes these orthonormal, and set
\[
\wt{C}^{\,+}_{K}:=\ha\,\wt{O}^{\,+}_{K},
\qquad
\wt{C}^{\,-}_{K}:=-\frac{K+1}{2K+1}\,\wt{O}^{\,-}_{K}\quad(K\ge1),
\qquad
C_{0}\lvert K\rangle:=(K+1)\lvert K\rangle ,
\]
completing the definition of the global lowering operator at the boundary by
$C^{\,-}\lvert0\rangle:=0$, which by \eqref{eq:O0-escape} is \emph{not} the action of
$\wt O^{\,-}_{0}$. The differential representative is used only for $K\ge1$. Then
$C^{\,+}\lvert K\rangle=\sqrt{(K+1)(K+2)}\lvert K+1\rangle$ and $C^{\,-}\lvert
K\rangle=\sqrt{K(K+1)}\lvert K-1\rangle$ for all $K\ge0$, the relations
$[C_{0},C^{\,\pm}]=\pm C^{\,\pm}$, $[C^{\,-},C^{\,+}]=2C_{0}$ hold, and the Casimir
vanishes. Thus, the circular states carry the irreducible lowest weight
$\mathfrak{su}(1,1)$ module of Bargmann index $k=1$, with lowest weight vector
$\lvert0\rangle$, that is, the one underlying the positive discrete series representation
$D^{+}_{1}$ of \cite{Bargmann}.
\end{proposition}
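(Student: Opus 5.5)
The proof will mirror that of Theorem~\ref{thm:su11-vertical}. First the matrix elements, from \eqref{eq:Odiag}; then the commutators, checked on basis vectors; then the Casimir and irreducibility.

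\emph{Matrix elements.} For the raising operator, \eqref{eq:Odiag} gives $\wt C^{\,+}_K f_{KK}=(K+1)f_{(K+1)(K+1)}$ for every $K\ge0$. Passing to $\lvert K\rangle=(K+1)^{-1/2}f_{KK}$ we get
\[
C^{\,+}\lvert K\rangle=(K+1)^{-1/2}(K+1)(K+2)^{1/2}\lvert K+1\rangle=\sqrt{(K+1)(K+2)}\,\lvert K+1\rangle .
\]
For the lowering operator with $K\ge1$, \eqref{eq:Odiag} gives
\[
\wt C^{\,-}_K f_{KK}=-\tfrac{K+1}{2K+1}\bigl(-(2K+1)\bigr)f_{(K-1)(K-1)}=(K+1)f_{(K-1)(K-1)} ,
\]
and hence $C^{\,-}\lvert K\rangle=(K+1)^{-1/2}(K+1)K^{1/2}\lvert K-1\rangle=\sqrt{K(K+1)}\lvert K-1\rangle$. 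At $K=0$ the prescription $C^{\,-}\lvert0\rangle=0$ agrees with the formula $\sqrt{0\cdot1}=0$. This is the point to stress: by \eqref{eq:O0-escape} the differential expression $\wt O^{\,-}_0$ leaves the span, so the global operator $C^{\,-}$ is defined abstractly there, following the caveat on label-dependent families stated before the subsection. With this prescription, $C^{\,-}$ is the adjoint of $C^{\,+}$ in the abstract inner product, since the two matrix elements coincide.

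\emph{Commutation relations.} Both $C^{\,\pm}$ shift $K$ by $\pm1$, so $[C_0,C^{\,\pm}]=\pm C^{\,\pm}$ holds on each basis vector. For the other bracket,
\[
[C^{\,-},C^{\,+}]\lvert K\rangle=\bigl[(K+1)(K+2)-K(K+1)\bigr]\lvert K\rangle=2(K+1)\lvert K\rangle=2C_0\lvert K\rangle .
\]
At $K=0$ this uses $C^{\,+}C^{\,-}\lvert0\rangle=0$, which is where the boundary prescription is essential.

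\emph{Casimir and identification.} The Casimir acts as
\[
(K+1)^2-\tfrac12\bigl[K(K+1)+(K+1)(K+2)\bigr]=(K+1)^2-(K+1)^2=0 .
\]
Comparing with $k(k-1)$ as in the proof of Theorem~\ref{thm:su11-vertical} gives $k=1$; the other root $k=0$ is excluded because $C_0$ has lowest eigenvalue $1$. Irreducibility on the algebraic span follows by the same argument as before. The eigenvalues $K+1$ of $C_0$ are distinct, so Lagrange interpolation isolates a basis vector. The lowering coefficient vanishes only at $K=0$ and the raising coefficient never vanishes, so repeated application regenerates every $\lvert K\rangle$.

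The only genuine obstacle is conceptual rather than computational. One must justify that prescribing $C^{\,-}\lvert0\rangle=0$ yields a well-defined global operator that is consistent with the differential representatives for $K\ge1$. This holds because the global operator is defined labelwise on a basis, so altering its value on one basis vector is unconstrained. All remaining steps are direct substitutions into \eqref{eq:Odiag}.
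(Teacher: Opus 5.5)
Your proposal is correct and takes the same approach as the paper. The paper's proof says only that the argument of Theorem~\ref{thm:su11-vertical} applies, starting from \eqref{eq:Odiag} with the stated rescalings, and that $C^{\,-}\lvert0\rangle=0$ is a boundary prescription rather than a computation. Your derivation of the matrix elements, the two commutators (including the $K=0$ case), the vanishing Casimir with $k=1$, and irreducibility carries out exactly that argument.
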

\begin{proof}
The proof is basically the same as that of Theorem~\ref{thm:su11-vertical}, starting from
\eqref{eq:Odiag} and the rescalings we have just stated, the case $K=0$ of the lowering
action being the boundary prescription and not a computation.
\end{proof}

\begin{remark}\label{rem:matrix-elements}
The circular realization is not canonical. Once $C_0|K\rangle=(K+1)|K\rangle$ has been
fixed, the adjointness and the commutator determine the matrix elements in
Proposition~\ref{thm:su11}. If instead we choose $C_0|K\rangle=(K+k)|K\rangle$ with
$k>0$, the same argument produces the lowest weight module of Bargmann index $k$ after a
suitable label dependent rescaling; this module underlies a representation of $SU(1,1)$
itself when $2k\in\N$, and one of its universal covering group otherwise
\cite{Bargmann}. Thus, only the raising and lowering pattern, including the imposed
boundary value $C^-|0\rangle=0$, is intrinsic to this construction. On the other hand, the
column index $k=\ell+1$ of Theorem~\ref{thm:su11-vertical} is a positive integer, and it
is fixed without any such rescaling.
\end{remark}

\subsection{Rows with $n\ge1$}
The horizontal descent does not close on the dressed space. At $\ell=0$ one has
$f_{n0}=\sin((n+1)\chi)$, and a direct computation gives
\begin{equation}\label{eq:H-leaves-space}
\wt H_-^{\,n0}f_{n0}=-(n+1)T_n(\cos\chi).
\end{equation}
This function does not vanish at $\chi=0$, while every $f_{K\ell}$ does. Thus, no lowest
weight first order module is obtained along the rows with $n\ge1$, and the possible
realizations lie outside the first-order calculus we are using here (they require higher order differential operators).

\appendix 
\section{Change of variable and conjugation}\label{app:conjugation} 

Here we describe explicitly how the Gegenbauer operators in 
Section~\ref{sec:ladders} are transformed into operators acting on the dressed functions. 
Under the change of variable $u=\cos\chi$ one has 
\[ 
\frac{\dd}{\dd u} = -\frac{1}{\sin\chi}\frac{\dd}{\dd\chi}, 
\qquad 
(1-u^2)\frac{\dd}{\dd u} = -\sin\chi\frac{\dd}{\dd\chi}, 
\qquad 
u\frac{\dd}{\dd u} = -\cot\chi\frac{\dd}{\dd\chi}. 
\] 
For any real number $s$, let $M_s$ denote multiplication by $\sin^s\chi$: $(M_sg)(\chi):=\sin^s\chi\,g(\chi)$. If $L\Geg{n}{\lambda}=a_{n,\lambda}\Geg{m}{\mu}$ is a Gegenbauer relation, then the corresponding dressed functions are $f=M_\lambda\Geg{n}{\lambda}(\cos\chi)$ and $\widehat f=M_\mu\Geg{m}{\mu}(\cos\chi)$. The operator acting between the dressed functions is therefore 
\begin{equation} 
\widetilde L := M_\mu L M_\lambda^{-1}, 
\label{app:eq:transformed-operator} 
\end{equation} 
because $\widetilde L f = M_\mu L M_\lambda^{-1} \bigl(M_\lambda\Geg{n}{\lambda}\bigr) = M_\mu L\Geg{n}{\lambda} = a_{n,\lambda}\widehat f$.
Thus, it is the composition $M_\mu L M_\lambda^{-1}$ that transforms the original 
Gegenbauer operator into the operator acting on the dressed functions. When $\mu=\lambda$, 
formula \eqref{app:eq:transformed-operator} is an ordinary conjugation. In that case, for 
a first order operator $L=A(\chi)\dd/\dd\chi +B(\chi)$ one obtains 
\begin{equation} 
M_sLM_s^{-1} = A(\chi)\frac{\dd}{\dd\chi} +B(\chi)-sA(\chi)\cot\chi. 
\label{app:eq:conj-gen} 
\end{equation} 
Indeed, 
\[ 
M_s\frac{\dd}{\dd\chi}M_s^{-1} = \frac{\dd}{\dd\chi}-s\cot\chi. 
\] 
For the vertical relations, the Gegenbauer order $\lambda=\ell+1$ is unchanged. The source 
and target functions therefore have the same dressing $M_{\ell+1}$, and the transformed 
operators are obtained by the conjugation 
$\widetilde J_{\pm}^{\,K} = M_{\ell+1}J_{\pm}M_{\ell+1}^{-1}$. 
Applying \eqref{app:eq:conj-gen} to the two fixed-order Gegenbauer relations displayed in 
the proof of Proposition~\ref{thm:vertical} gives 
$\widetilde J_{\pm}^{\,K} = \pm\sin\chi\dd/\dd\chi +(K+1)\cos\chi$. The common coefficient 
$K+1=n+\ell+1$ results from combining the zeroth order term of the original Gegenbauer 
operator with the term introduced by the conjugation. For the horizontal relations, the 
degree $n$ is unchanged but the Gegenbauer order changes. If the order is raised from 
$\lambda$ to $\lambda+1$, the source is dressed by $M_\lambda$ and the target by 
$M_{\lambda+1}$. Hence the operator acting on the dressed functions is 
$\widetilde H_+^{\,K\ell} = M_{\lambda+1}H_+M_\lambda^{-1}$, for $\lambda=\ell+1$. 
Likewise, for the 
order-lowering relation, $\widetilde H_-^{\,K\ell} = M_{\lambda-1}H_-M_\lambda^{-1}$. Note 
that these are not ordinary conjugations because the multiplication operators on the left 
and on the right are different. Substituting $u=\cos\chi$ into the two Gegenbauer order-
shifting relations and performing the compositions above gives 
\[ 
\widetilde H_+^{\,K\ell} = -\cos\chi\frac{\dd}{\dd\chi} +\frac{\ell+1}{\sin\chi} +(K+1)\sin\chi 
\] 
and 
\[ 
\widetilde H_-^{\,K\ell} = -\cos\chi\frac{\dd}{\dd\chi} -\frac{\ell}{\sin\chi} -(K+1)\sin\chi. 
\] 
The in-level operators are obtained in the same way. The derivative relation 
$\frac{\dd}{\dd u}\Geg{n}{\lambda} = 2\lambda\Geg{n-1}{\lambda+1}$ raises the order 
from $\lambda$ to $\lambda+1$. Therefore its transformation is 
\[ 
M_{\lambda+1} \left(-\frac{1}{\sin\chi}\frac{\dd}{\dd\chi}\right) M_\lambda^{-1} = -\frac{\dd}{\dd\chi}+\lambda\cot\chi. 
\] 
With $\lambda=\ell+1$, this is precisely $\widetilde\Lambda_\ell^+ = -\dd/\dd\chi+(\ell+1)\cot\chi$. Similarly, transforming the weighted companion relation, which lowers the order from $\lambda$ to $\lambda-1$, gives $\widetilde\Lambda_\ell^- = \dd/\dd\chi +\ell\cot\chi$. These calculations also explain why the transformed operators depend on $n$ and $\ell$ only through the labels $(K,\ell)$, with $K=n+\ell$.

{\small\raggedright

}

\end{document}